\setlist{nosep,topsep=0pt,itemsep=0pt,leftmargin=*}
\title{Online Resource Allocation via Static Bundle Pricing}
\author[1,2]{Dimitris Fotakis}
\author[1,2]{Charalampos Platanos}
\author[1,2]{Thanos Tolias}
\affil[1]{National Technical University of Athens, Greece}
\affil[2]{Archimedes RU, Athena RC, Greece}
\affil[ ]{\texttt{\href{mailto:fotakis@cs.ntua.gr}{fotakis@cs.ntua.gr} \quad
\href{mailto:harrisplat@gmail.com}{harrisplat@gmail.com} \quad
\href{mailto:thanostolias@mail.ntua.gr}{thanostolias@mail.ntua.gr}}}
\date{}
\begin{document}

\maketitle

\begin{abstract}
Online Resource Allocation addresses the problem of efficiently allocating limited resources to buyers with incomplete knowledge of future requests. In our setting, buyers arrive sequentially requesting a set of items, each with a value drawn from a known distribution. We study the efficiency of static and anonymous bundle pricing in environments where the buyers' valuations exhibit strong complementarities. In such settings, standard item pricing  fails to leverage item multiplicities, while static bundle pricing mechanisms are only known for very restricted domains and their analysis relies on domain-specific arguments. 

We develop a unified bundle pricing framework for online resource allocation in three well-studied domains with complementarities: (i)~single-minded combinatorial auctions with maximum bundle size~$d$; (ii)~general single-minded combinatorial auctions; and (iii)~network routing, where each buyer aims to route a unit of flow from a source node $s$ to a target node $t$ in a capacitated network. Our approach yields static and anonymous bundle pricing mechanisms whose performance improves exponentially with item multiplicity. For the $d$-single-minded setting with minimum item multiplicity~$B$, we obtain an $O(d^{1/B})$-competitive mechanism. 
For general single-minded combinatorial auctions and online network routing, we obtain $O(m^{1/(B+1)})$-competitive mechanisms, where $m$ is the number of items.

We complement these results with information-theoretic lower bounds. We show that no online algorithm can achieve a competitive ratio better than $ \widetilde{\Omega}(m^{1/(B+2)})$ for single-minded combinatorial auctions and $ \widetilde{\Omega}(d^{1/(B+1)})$ for the $d$-single-minded setting. Our constructions exploit a deep connection to the classical extremal combinatorics problem of determining the maximum number of qualitatively independent partitions of a ground set.
\end{abstract}

\maketitle

\newpage
\section{Introduction}

Online Resource Allocation is a central problem in Economics and Operations Research. A seller faces a sequence of buyers, each demanding a set of resources, and must make irrevocable allocation decisions with limited information about future arrivals. In this work, we study online resource allocation in a Bayesian setting where the seller has access to the distribution of buyers' valuations, placing the problem within the general prophet-inequality framework.

\paragraph{General Background.}
The study of prophet inequalities originates in the classical single-item setting. \citet{krengel1977} (also crediting D.J.H.~Garling) and independently \citet{samuel1984comparison} established the foundational results comparing the value obtained by online stopping rules against the best possible (a.k.a. optimal) value obtained by a ``prophet'' who observes all realizations in advance. More recently \citet{haji1007} observed the connection of prophet inequalities to mechanism design, showing how prophet-style guarantees can be leveraged to design truthful posted-price mechanisms. Building on this connection, \citet{chawla2009} initiated the study of prophet inequalities for revenue maximization, further expanding the framework's applicability beyond the social welfare.

Following these foundational results, research on prophet inequalities expanded rapidly, branching into richer allocation environments and more complex valuation models. In particular, combinatorial auctions emerged as a central and intensively studied setting within the prophet-inequality framework. For submodular and XOS valuations, \citet{fgl-balanced14} significantly generalized the basic prophet inequality, introducing the elegant framework of balanced prices, and obtained a $2$-competitive posted-item-price mechanism. For subadditive valuations, \citet{duttingloglogm} obtained an $O(\log\log m)$-competitive item-pricing mechanism (where $m$ is the number of items), a guarantee recently improved by  \citet{Correa2023ACF}, who achieved a $6$-competitive prophet inequality. 
For MPH-$d$ valuations, which exhibit a degree of complementarity quantified by $d$, \citet{braun2022simplifiedprophetinequalitiescombinatorial} presented an item-pricing mechanism with competitive ratio $2d + \sqrt{2d(d-1)} - 1$. 

In this work, we focus on single-minded combinatorial auctions, which exhibit strong complementarities: buyers derive value only if they obtain their entire desired bundle. Single-minded combinatorial auctions capture the most extreme form of complementarities and arise naturally in many applications where partial allocation provides no value. Their simplicity makes them a fundamental testbed for understanding how complementarities interact with online decision-making and posted-price mechanisms. For a simple canonical example, we may refer to replenishing and pricing inventory in a grocery store. A shopper attempting to bake a cake gains value only if they obtain all necessary ingredients; eggs and milk provide zero utility if flour is missing.

Extending the grocery store analogy, many practical applications feature multiple copies of each item rather than unique goods. A growing line of work examines how the performance of pricing mechanisms improves with item multiplicity. \citet{multi-alaei} showed that in the single-item setting with capacity $B$, one can achieve $1+O(1/\sqrt{B})$ competitiveness. Subsequently, \citet{multi-jiang} and \citet{multi-suchi} gave matching upper and lower bounds for all values of~$B$. Recently, \citet{ghuge2025} studied online resource allocation with large capacities under stochastic arrivals using \emph{dynamic pricing} mechanisms (see also e.g., \cite{DevanurJSW19,ChawlaDHKMS17,KesselheimRTV18,GuptaM16} and their references for more previous work on achieving nearly optimal competitiveness for online resource allocation in the large-capacity regime). \citet{ghuge2025} proved that for capacity $B = \widetilde\Omega(1/\eps^6)$, a competitive ratio of $1+O(\eps)$ can be achieved using only a single sample from each distribution, highlighting that high capacities substantially mitigate uncertainty even with limited information.

\paragraph{Context and Motivation.}
We focus on \emph{static and anonymous} pricing mechanisms, in which the seller commits to a menu of prices before any buyer arrives and does not adjust them over time. Although such mechanisms are simple, transparent and widely deployed in practice, they face significant limitations in the presence of complementarities. \citet{fgl-balanced14} 
proved that when buyers' valuations exhibit complementarities, no item-pricing mechanism can be less than $\Omega(m)$-competitive. When parameterizing by the maximum bundle size $d$, \citet{correa-d} obtained a tight $(d+1)$-competitive guarantee.  

Motivated by these challenges, \citet{chawla18} initiated the study of bundle pricing in settings with strong complementarities, in the context of online allocation of intervals on the line and paths in trees. Their results demonstrated that richer pricing menus can circumvent the barriers inherent to item pricing. For allocating intervals, where the length of the longest interval requested is $d$ and items are available in $B$ copies, \citet{chawla18} gave an $O\left(\frac{\log d}{B \log\log d}\right)$-competitive static, anonymous bundle-pricing scheme and a matching lower bound on the performance of any online algorithm. For the more challenging setting of allocating paths in a tree, they presented a lower bound of $\Omega(\sqrt{d/\log d})$ on the competitiveness of any online algorithm and a static, anonymous bundle pricing scheme with competitive ratio $O((\log v_{\max}) / B)$, where $v_{\max}$ is the maximum valuation. At a conceptual level, their results exploit the interval (resp. the tree) structure in order to reduce the setting with complementarities to the single item setting at a loss in the competitive ratio that scales with $\log d$ (resp. $\log v_{\max}$). Exploiting the restricted combinatorial structure of intervals, \citet{chawla18} improve exponentially on the best possible competitiveness of online allocations 
and show that the competitive ratio improves inversely with item capacities. 
Whether non-trivial competitive guarantees are possible for more general combinatorial structures (e.g., networks), even when the buyers have binary valuations and request bundles of equal size $d$, is posed as a challenging open question in \cite{chawla18}.

Hence, previous work motivates the following fundamental questions: 

\begin{quote}
\begin{enumerate}
\item \emph{What static pricing mechanisms best suit environments with complementarities?}

\item \emph{How fast does the competitive ratio improve with item multiplicity?}

\item \emph{Is there any standard framework for designing static bundle pricing mechanisms and establishing lower bounds in the presence of complementarities?}
\end{enumerate}
\end{quote}

\paragraph{Contribution in a Nutshell.}
In this work, we make substantial progress towards answering all the three questions above. On the positive side, we develop a unified framework for the design of bundle pricing mechanisms in the presence of complementarities. Our pricing mechanisms are \emph{static}, in the sense that the corresponding pricing menus are fully determined before any buyer's arrival, and \emph{anonymous}, in the sense that the pricing menu presented to
a buyer does not depend on their identity. We apply our framework  to the following domains:

\begin{enumerate}
\item $d$-single-minded combinatorial auctions, where requested bundles are of size at most $d$. 

\item General single-minded combinatorial auctions with arbitrary bundle sizes. 

\item Online network routing, where each buyer $b$ aims to route a unit of flow from a source node $s_b$ to a destination node $t_b$ in an edge-capacitated network. 
\end{enumerate}

We obtain competitive ratios that are $O(d)$ and $O(\sqrt{m})$, respectively, in the basic case of unit item capacities, and \emph{improve exponentially} with item capacities. Thus, one of our key conceptual findings is that \emph{bundle pricing becomes exponentially more powerful as item capacities increase} (resembling what happens for the standard offline optimization versions of single-minded combinatorial auctions and multicommodity network routing, see e.g., \cite{LehmannOS02,BartalGN03,KolliopoulosS04,LaviS11,BansalKNS12}). 
%

We complement our positive results with information-theoretic lower bounds for both the $d$-single-minded and the general single-minded settings that apply to any online algorithm. Our lower-bound construction is obtained by embedding the construction of \citet{babaioff} into the setting of single-minded combinatorial auctions. Our proof builds on an intriguing connection to a fundamental question in extremal combinatorics: ``\emph{what is the maximum number of pairwise qualitatively independent partitions of a ground set?}'', first asked by \citet{renyi}. 

\section{Model, Technical Overview and Main Results}
\label{sec:overview}

\paragraph{Model.}
We consider a set $\mathcal{M}$ of $m$ distinct items, each available in at least $B \in \mathbb{Z}_{>0}$ copies, and a set $\mathcal{N}$ of $n$ buyers, typically indexed by $b \in [n]$. For each buyer $b$, we are given a fixed set of items $S_b$ and a distribution $\mathcal{D}_b$ from which $b$'s value $v_b$ 
for their requested set $S_b$ is drawn. We assume that the distributions $\mathcal{D}_b$ are independent across buyers and are given to the mechanism in an explicit form, i.e., by the probabilities $q_{b,v}$ that buyer $b$ has value $v_b = v$ for getting their set $S_b$, for all buyers $b$ and all values $v$ in the support of $\mathcal{D}_b$ (which is assumed to be some fixed subset of non-negative integers). Each buyer $b$ has utility $v_b$ minus the price posted by the mechanism for their set $S_b$ (or a  superset $S \supseteq S_b)$, if $S_b$ (or some $S \supseteq S_b$) is allocated to $b$, and $0$ otherwise. 

We aim to maximize the expected social welfare, i.e., the expected total value of the buyers who receive their requested bundle, where expectation is taken over both the randomness of the mechanism and the buyers' realized values. The mechanism's competitive ratio is established against the expected social welfare of a ``prophet'', who collects the expected (over the buyers' realizations) total value of an optimal solution for the realized instance. Our mechanisms are online, in the sense that they have to irrevocably decide on the allocation of the requested bundles upon the agents' arrival, without any knowledge of the realized values of the agents arriving in the future. The order in which the online mechanism deals with the buyers is determined by an \emph{almighty adversary} that has access to the realized buyer values and the mechanism's internal randomness.  

\subsection{General Framework for Static Bundle Pricing}

Similar to previous work (see e.g., \cite{chawla18,multi-jiang,optimal-ocrs-singla}), we start with the ex-ante linear programming relaxation (a.k.a. EA-LP, formally stated in Section~\ref{sec:EA-LP}), which upper bounds the expected social welfare of the prophet. Before solving EA-LP, we scale down all item capacities by a factor~$\gamma$, creating a slack that later absorbs the demand fluctuations created by the randomness of the realization. The design of our pricing mechanism is guided by the optimal fractional solution to $\gamma$-scaled EA-LP. 

A notable feature of EA-LP's solution is its simple and rigid structure. For any buyer $b$, EA-LP allocates probability mass to values in an almost threshold-like form: below some cutoff value, the allocation is zero; above it, the allocation saturates the value distribution (resulting in so-called \emph{tight values}); and there is at most one \emph{crucial value} where the allocation is positive and non-tight (see Section~\ref{sec:structure}). Aggregating over all buyers interested in the same bundle~$S$, we define a canonical price level $w_S$ (a.k.a. the \emph{important value} of $S$) which dictates how the pricing menu treats~$S$.

This structure directly guides our menu construction (see  Section~\ref{sec:framework}). The mechanism posts a static, anonymous menu containing multiple ``copies'' of each bundle~$S$, priced either at the important value~$w_S$ or at the next higher value $w_S + 1$. We aim to mimic the optimal fractional solution of EA-LP as closely as possible: ideally, a buyer~$b$ with realized value~$v$ should receive an available copy of their desired bundle with probability $x_{b,v}/q_{b,v}$, independently of other buyers, where $x_{b,v}$ is the allocation mass assigned by EA-LP to the buyer-value pair $(b,v)$ and $q_{b,v}$ is the probability that $b$'s realized value is $v$. This would result in a competitive ratio of $O(\gamma)$, for a scaling factor $\gamma$ large enough to absorb demand fluctuations with constant probability.

A key challenge is that such independent acceptance probabilities cannot be implemented directly by a static and anonymous menu, in which the mechanism must commit ex-ante to offering a fixed number of copies of each bundle at a given price. To do so, guided by the solution of EA-LP, we post no copies of bundle~$S$ at prices below its important value $w_S$ and practically unlimited copies at the smallest price $w_S + 1$ that is strictly larger than $S$'s important value. Thus, we rely on the scarcity of high-value buyers (as quantified by the $\gamma$-scaled down capacity constraints of EA-LP) to limit the number of item copies allocated to the buyers. The number of copies offered at $S$'s important value $w_S$ requires more care. Offering too many copies risks exhausting item capacities and blocking high-value buyers later in the sequence, while offering too few copies may result in a significant loss relative to the fractional benchmark. To balance these effects, we offer $\lfloor x_{S,w_S} \rfloor$ additional copies of a bundle~$S$ at its important value~$w_S$, where $x_{S,w_S}$ denotes the total EA-LP allocation to buyers demanding bundle $S$ at value $w_S$. If $x_{S,w_S} < 1$, we randomize, with appropriate probability, between offering one additional copy of $S$ at price $w_S$ or not (see also Algorithm~\ref{alg:post-menu} in Section~\ref{sec:framework}). 

Our mechanism posts the static and anonymous menu described above, which consists of tuples $(S, w, c)$, denoting that $c$ copies of $S$ are available at price $w$. Let $w_b(S_b)$ be the lowest price at which a copy of $S_b$ is available in the mechanism's menu on the arrival of a buyer $b$. If $b$'s realized value $v_b$ is at least $w_b(S_b)$, a corresponding menu entry is allocated to $b$ and the number of $S_b$'s copies available at price $w_b(S_b)$ decreases by one. If all items $e \in S_b$ are still available (i.e., their residual capacity is positive), the bundle $S_b$ is allocated to $b$, the mechanism collects value $v_b$ and the residual capacity of all items $e \in S_b$ decreases by one. 

We highlight that the number of times an item $e \in \mathcal{M}$ is included in bundles offered by the mechanism's menu is typically much larger than $e$'s actual capacity. Hence, it may be the case that a buyer $b$ demands their bundle $S_b$ at a price currently available in the menu, but $b$ does not receive $S_b$ because some item $e \in S_b$ has been exhausted (i.e., $e$'s residual capacity is $0$). To deal with this situation in the analysis, we introduce a \emph{(capacity) unconstrained} version of our mechanism: a hypothetical process that completely ignores the item capacities and is only restricted by the mechanism's menu entries (guided as above by the fractional optimal solution to $\gamma$-scaled EA-LP). The unconstrained mechanism is not hard to analyze and satisfies two key properties:

\begin{enumerate}
\item Its expected social welfare is within a constant factor from the optimal value of the $\gamma$-scaled EA-LP, denoted $\fopt_\gamma$. In turn, $\fopt_\gamma$ is within $O(\gamma)$ of the prophet's expected social welfare. 

\item For every item~$e$, the number of $e$'s copies allocated to the buyers is at most $B/\gamma$ in expectation and (with an appropriate choice of $\gamma$) at most $B$ with constant probability. 
\end{enumerate}

The social welfare achieved by our mechanism equals the unconstrained welfare minus the expected value of \emph{blocked} buyers, namely buyers receiving a menu entry in the unconstrained mechanism, but not their desired bundle due to the item capacity constraints. Due to Property~(1), it suffices to upper bound the expected welfare lost due to blocked buyers by a constant fraction of $\fopt_\gamma$. To this end, we crucially exploit Property~(2). 

A key technical challenge in the analysis is that (due to the power of the almighty adversary) the availability of a buyer's desired bundle depends, in a quite delicate way, on both the buyers' realized values and the mechanism's randomization on the number of a bundle's $S$ copies posted at its important value $w_{S}$. Hence, we cannot deal with these two sources of randomness as independent anymore. Instead, we need to take some careful steps in order to separate the two sources of randomness. We believe that the careful analysis of our randomized menu construction might be of independent interest.

\paragraph{$d$-Single-Minded Setting.}
In the $d$-single-minded setting, we assume that $|S_b| \leq d$ for all buyers. 
Scaling the item capacities in EA-LP by a factor $\gamma = e(10d)^{1/B}$ and using a concentration argument, we show that any bundle allocated by the unconstrained mechanism is blocked due to item capacities with probability at most $0.1$. Employing a careful analysis of the event that a buyer $b$ with realized value equal to the important value of their desired bundle $S_b$ is blocked, we show (in Lemma~\ref{lem:d-blocked-small}) that the expected value lost due to blocked buyers is at most a $\fopt_\gamma / 10$ and obtain that: 

\begin{theorem}\label{thm:dsingle}
In the $d$-single-minded case with minimum item multiplicity $B$, our online static and anonymous bundle-pricing mechanism achieves a competitive ratio of $O(d^{1/B})$. 
\end{theorem}

\paragraph{General Single-Minded Setting.}
We next consider general single-minded buyers that may demand bundles of arbitrary size. Using more aggressive capacity scaling, with $\gamma = e(20m)^{1/(B+1)}$, in EA-LP and concentration, we show that the expected number of blocked buyers is at most $1/20$. 

A complication arises from the possibility of unbounded valuations: even a single blocked buyer with a very large value could dominate the welfare loss, rendering bounds on the number of blocked buyers insufficient. To address this issue, we augment the main mechanism with a simple \emph{small market} component that, with a carefully chosen probability, posts a single bundle at price $2\fopt_\gamma$. This auxiliary part captures a constant fraction of the welfare contributed by very large values, ensuring that heavy-tailed realizations do not overwhelm the performance guarantee. The main mechanism deals with the structured, high-probability portion of the market, while the small-market component isolates and stabilizes the contribution of extreme values. Thus, we obtain: 

\begin{theorem}\label{thm:general-single-minded}
For general single-minded combinatorial auctions with minimum item multiplicity $B$, our online static and anonymous bundle-pricing mechanism is $O(m^{1/(B+1)})$-competitive.
\end{theorem}

\paragraph{Online Network Routing.}
We extend our framework to online routing in a fixed directed network $G$. Instead of a bundle $S_b$, each buyer $b$ is now associated with two fixed known nodes, their source $s_b$ and their target $t_b$, and has value $v_b \sim \mathcal{D}_b$ for routing a unit of flow along any $s_b - t_b$ path. 

For the menu construction, we conceptually treat each source-target pair $(s,t)$ as an abstract bundle and group buyers according to their requested pair. Applying the same menu construction as before, we determine how many copies are posted at each price for each $(s,t)$ pair. Thus, we directly inherit Property~(1) of the unconstrained algorithm. 

An additional challenge is that we should select a specific $s - t$ path for each allocated $(s, t)$-bundle. To this end, we independently sample a path for each $(s, t)$-bundle available according to a carefully chosen distribution, which is derived from the solution to EA-LP. Using a roulette-style sampling argument, we show that our randomized routing scheme over selected $s-t$ paths  preserves the expected load guarantees on every edge and ensures Property~(2) of the unconstrained algorithm. Relying on the approach towards proving Theorem~\ref{thm:general-single-minded}, we obtain: 

\begin{theorem}\label{thm:graph}
For online routing in a network with $m$ edges, each with capacity at least $B$, our online static and anonymous bundle-pricing mechanism is $O(m^{1/(B+1)})$-competitive.
\end{theorem}

\paragraph{Incentive Compatibility.}
An additional desirable property of our menu-based pricing scheme is \emph{price subadditivity}. Namely, for every bundle~$S$, the posted price of~$S$ is no greater than the total price of any collection of bundles whose union includes~$S$. This guarantees that a buyer who desires~$S$ always prefers purchasing a copy of~$S$ rather than assembling a larger collection of bundles covering $S$. Enforcing price subadditivity, however, is nontrivial. The prices suggested by EA-LP need not satisfy this property, and the supports of the buyer value distributions may not contain appropriate price levels that would allow for a subadditive menu without sacrificing performance. To address this issue, we first augment the support of each buyer's value distribution with carefully chosen auxiliary price points. Then, before constructing our pricing menu, we apply a post-processing step to EA-LP solution that enforces subadditivity (the details can be found in Appendix~\ref{app:subadditivity}).

\subsection{Lower Bounds}

The proofs of our impossibility results follow a standard lower-bound template for prophet inequalities (see e.g., \cite{chawla2009,babaioff}). The buyers are partitioned into \emph{groups} of equal size~$t$. Buyers arrive sequentially, each has value $1$ with probability $1/t$ and $0$ otherwise. Their bundles are engineered so that we can select buyers from at most one group. Since the expected number of buyers realized in a fixed group is $1$, after committing to a group, the online algorithm can extract an expected value of $O(1)$ from that group. In contrast, if we have about $t^t$ groups, with constant probability, there exists a group where all $t$ buyers have value $1$ (a.k.a. a \emph{fully active} group), which results in an expected value of $\Theta(t)$ for the prophet. The main challenge is to \emph{embed} the aforementioned bundle structure in the setting of single-minded combinatorial auctions, where feasibility is induced by item capacities.

For unit capacities, we associate each group with a partition of the $m$ items into $t$ disjoint classes. The $t$ buyers in the group are single-minded, with buyer~$i$ demanding the $i$-th class. Disjointness within a partition guarantees that \emph{within a group}, all $t$ buyers can simultaneously be accepted. To prevent accepting buyers from \emph{different groups}, we choose a family of partitions that are \emph{pairwise qualitatively independent (QI)}, a notion introduced by \citet{renyi}, where every class of one partition intersects every class of every other partition. Thus, no feasible allocation can accept buyers from different groups, thus realizing the ``\emph{at most one group}'' constraint of the template. 

With $B$ copies per item, we have the constraint of ``\emph{at most $B$ groups}''. The corresponding combinatorial notion is that of \emph{$(B\!+\!1)$-way qualitative independence}: for any $B\!+\!1$ partitions and any choice of one class from each, their intersection is nonempty. Using such a family, any allocation that serves buyers from $B\!+\!1$ distinct groups must violate the capacity constraint of some item. Meanwhile, any fixed set of $B$ groups is simultaneously feasible, because each item appears in at most one demanded class per group.

\citet{poljak} gave a lower bound on the number of $r$-way qualitatively independent partitions of $m$ items into $t$ classes of 
\(
g(m, t, r) \geq \frac{r}{et}\cdot \exp\left({\frac{m}{rt^r}}\right)\,.
\)
Choosing $t = \floor{\lp(\frac{m}{2B\ln m}\rp)^{1/(B+2)}}$, we prove that there exists a $(B\!+\!1)$-way QI family of size at least $Bt^t$; thus, we can instantiate $Bt^t$ groups of $t$ buyers each. In our construction, each group is fully active with probability $\prn{1/t}^t$. Hence, the expected number of fully active groups is $B$, which implies that the expected social welfare of the prophet is $\Theta(tB)$. Due to our construction, any online algorithm is restricted to accepting buyers from at most $B$ groups and due to its online nature, can extract an expected value of $O(1)$ from each group. Therefore, 

\begin{theorem}\label{thm:lb-gen}
In the general single-minded setting with $m$ items, each with capacity at least $B$, the competitive ratio of any online algorithm is
$\Omega((m/\ln m)^{1/(B+2)})$.
\end{theorem}

For unit capacities, we can slightly improve the lower bound of Theorem~\ref{thm:lb-gen} first to $\Omega(m^{1/3})$, using Poljak’s explicit construction, and then to $m^{1/3+\Omega(1/\log\log m)}$, by relating pairwise qualitative independence to the Alon-Alweiss measure $\textsc{AAM}$ introduced in \cite{AA20, SVW22}, which yields larger QI families on slightly fewer items.

To obtain a lower bound for the $d$-single-minded setting, we introduce the notion of \emph{balanced QI partitions}, where every class has size at most $d$. We show that the probabilistic construction of \citet{poljak} extends to the balanced setting, yielding large balanced $(B\!+\!1)$-way QI families with all classes of size about $d = m/t$. Using 
\( t = \floor{(\frac{d}{2B\ln d})^{1/(B+1)}} \), 
we obtain that: 

\begin{theorem} \label{thm:lb-d}
In the $d$-single-minded setting with $m$ items, each with capacity at least $B$, the competitive ratio of any online algorithm is $\Omega((d/\ln d)^{1/(B+1)})$.
\end{theorem}

\section{Related Work}

\paragraph{Online Contention Resolution Schemes.} 
Our general approach is inspired by the framework of \emph{contention resolution schemes (CRSs)}, introduced by \citet{crs-chekuri}. A CRS is given a fractional point $x^*$ in the relaxation polytope and samples a random set $R$ according to the product distribution with marginals $x^*$. Since the resulting set $R$ may violate feasibility constraints, the CRS subsequently discards selected elements to obtain a feasible $R' \subseteq R$. Subsequently, \citet{ocrs} introduced the online counterpart of CRSs, termed \emph{online contention resolution schemes (OCRSs)}, which apply to a wide range of online settings, including prophet inequalities. \citet{ocrs} showed that the existence of a $c$-balanced OCRS implies a $c$-competitive prophet inequality. More recently, \citet{optimal-ocrs-singla} established the converse, i.e., that a $c$-competitive prophet inequality yields a $c$-competitive ex-ante OCRS. 

A central concept in the OCRS framework is \emph{selectability}. Informally, an OCRS is $c$-selectable with respect to a relaxation polytope~$P$ if for every point $x \in P$, every active element is included in the final solution with probability at least~$c$. However, restricting our attention to static bundle pricing mechanisms, selectability does not align well with anonymity. To illustrate, consider a single-item setting with two buyers, $a$ and $b$, each having deterministic value~$1$. The relaxation polytope admits a feasible fractional solution that assigns mass~$1/2$ to each buyer. Any static and anonymous pricing mechanism assigns a single price to the item, possibly drawn from a probability distribution. This price is accepted or rejected by both buyers. If the price is accepted, the adversary can always select buyer $a$ to receive the item, rendering meaningful selectability guarantees impossible.


\paragraph{Static versus Dynamic Pricing.}
A related line of work is due to \citet{rubinstein-beyond-matroids}, who studied prophet inequalities under general downward-closed feasibility constraints. He obtained an $O(\log n \cdot \log r)$-competitive algorithm, where $n$ denotes the number of arriving elements (corresponding to buyers in our setting) and $r$ the maximum cardinality of any feasible solution. Rubinstein's algorithm observes each realized value upon arrival and accepts it, if doing so does not decrease a carefully designed dynamic potential function. Since this potential function depends on the currently accepted set, the resulting acceptance rule is inherently dynamic. Recently, \citet{krysta} proved that any prophet inequality can be implemented by a (sequential, non-static and non-anonymous) posted-price mechanism. 

\paragraph{Lower Bounds.}
The general $\Omega(\log n/\log\log n)$ lower bound in prophet inequalities with $n$ buyers, which is used as a template for the proofs of theorems~\ref{thm:lb-gen}~and~\ref{thm:lb-d}, was introduced in other forms by \citet{babaioff} and \citet{chawla2009}. Lower bounds for prophet inequalities under the intersection of $d$ partition matroids also imply lower bounds in our online $d$-single-minded setting. To this end, \citet{klein-wein12} gave an $\Omega(\sqrt{d})$ lower bound and \citet{SVW22} strengthened this to $d^{1/2+\Omega(1/\log\log q)}$ using techniques from \cite{AA20}. 

\paragraph{Computational (In)approximability.}
Our mechanisms can be implemented in time polynomial in the number of buyers $n$ and the number of items $m$, as long as the maximum possible valuation is polynomially large in $n$ and $m$. Our approximation guarantees match (i) the computational complexity theoretic lower bounds of $\Omega(m^{1/(B+1)})$ for general single-minded combinatorial auctions \cite{BartalGN03} and $\Omega(\sqrt{m})$ for network routing in directed networks with unit edge capacities \cite{GuruswamiKRSY03}; and (ii) the best known approximation guarantees of $O(d^{1/B})$ for $d$-single-minded combinatorial auctions \cite{BansalKNS12} and $O(m^{1/(B+1)})$ for network routing in directed networks with edge capacity at least $B$ \cite{KolliopoulosS04}.

\section{Notation and Preliminaries}
\label{sec:prelim}

Our model is formally defined in Section~\ref{sec:overview}. We next introduce some additional notation. 

We let $[k] \coloneqq \{1, \ldots, k\}$, for any $k \in \mathbb{Z}_{>0}$.  We let $c(e) \in \mathbb{Z}_{>0}$ denote the multiplicity (a.k.a. capacity) of every item $e \in \mathcal{M}$. We assume that the buyers' value distributions $\mathcal{D}_b$ are discrete with common finite support $\Supp = \{0, \ldots, R \}$ with $R\in \mathbb{Z}_{>0}$. Thus, for all $b \in \mathcal{N}$ and $v\in \Supp$, $q_{b,v} > 0$. In Appendix~\ref{app:support-assumption}, we remove this assumption and allow buyers to have different supports. 

We let $\mathcal{D}\coloneq \prod_{b\in [n]} \mathcal{D}_b = \mathcal{D}_1\times \ldots \times \mathcal{D}_n$ and let $I\sim\mathcal{D}$ be an instance of buyers' values. 
For each buyer $b$, we denote by $v_b$ the random variable corresponding to its value.
For a set $S$, let $\mathcal{N}_S \coloneq \{b: S_b = S\}$ be the set of buyers that desire $S$. We let $q_{S,v} = \sum_{b\in \mathcal{N}_S} q_{b, v}$ be the total mass over buyers in $\mathcal{N}_S$ that have realized value $v$.
Whenever an index of the variables $x_{b, v}$ is omitted, we implicitly assume summation over that index (unless stated otherwise). E.g., $x_{b} = \sum_{v \in \Supp} x_{b, v}$ and $x_{v} = \sum_{b \in [n]} x_{b, v}$. 
For an event $\mathcal{E}$, we let $\ind{\mathcal{E}}$ denote its indicator function. We sometimes use $\ell_{A}(e)$ to denote the load of item $e$ due to buyers in set $A \subseteq \mathcal{N}$, i.e. $\ell_A(e) = \sum_{b\in A} \ind{e\in S_b}$.


\section{Ex-Ante Linear Program and Solution Structure}
\label{sec:EA-LP}

The competitive ratio of our mechanism is established against (and our pricing menu is guided by) an upper bound on the expected optimal social welfare $\opt$ given by the following ex-ante fractional relaxation of the social welfare maximization problem: 

\begin{align*}
    \max \quad &\sum_{b\in [n]}\sum_{v\in \Supp} x_{b, v} \cdot v \\ 
    \textnormal{s.t.} \quad &\sum_{b\in[n] : e\in S_b} \sum_{v\in \Supp} x_{b, v} \leq c(e) \qquad \forall e\in \mathcal{M} \\ 
    &0\leq x_{b, v} \leq q_{b,v} \qquad \forall b \in [n],\ \forall v\in \Supp
\end{align*}

In the above linear program, usually referred to as \emph{ex-ante LP} (or \emph{EA-LP}, in short), the capacity constraints are enforced only in expectation (ex-ante), so its solution represents a feasible ex-ante fractional allocation. We let $\fopt$ denote EA-LP's optimal value. Generally, ex-ante fractional relaxations upper bound the prophet's optimal expected social welfare $\opt$ (see e.g., \cite[Lemma~2.1]{chawla18}). For completeness, in Appendix \ref{app:proofs},  Lemma~\ref{lem:fopt-opt}, we prove that $\fopt \geq \opt$.

%

\subsection{Structure of Optimal Fractional Allocation}
\label{sec:structure}

Before formally describing our bundle pricing scheme, we provide intuition and establish some structural properties of the optimal EA-LP allocation, denoted $(x_{b,v})_{b\in [n], v\in \Supp}$ in what follows.  

If we did not require the mechanism to be anonymous or static, a natural approach would be to accept each new buyer $b$ with realized value $v$ with probability $x_{b,v}/q_{b,v}$. Namely, $x_{b,v}$ is interpreted as the ex-ante probability of allocating $S_b$ to buyer $b$, if $b$'s realized value is $v_b = v$. This allocation ensures that buyer $b$ is accepted with probability $x_b = \sum_{v \in \Supp} x_{b,v} = \sum_{v \in \Supp} q_{b,v} \cdot (x_{b,v}/q_{b,v})$ in total, matching the allocation suggested by EA-LP. However, this allocation is not anonymous (because buyers $b$ and $b'$ that demand the same bundle $S$ and have the same realized values $v$ may be accepted with different probability due to different priors $q_{b,v}$ and $q_{b',v}$). Moreover, such an allocation may not be feasible, because the item capacity constraints may be violated ex-post. In the following, we show how to implement a static and anonymous pricing scheme which mimics this allocation. 


\paragraph{Fractional Allocation per Buyer.}
Following up on the intuition above, we call a value $v\in \Supp$ \emph{tight} for a buyer $b$ if $x_{b,v} = q_{b,v} > 0$, and \emph{crucial}, if $0 < x_{b,v} < q_{b,v}$. Intuitively, we should always (resp. sometimes) accept $b$ if their value is tight (resp. crucial). 

For each buyer $b$, we consider the vector $X^{(b)} = ( x_{b,v} )_{v \in \Supp}$, whose entries correspond to the optimal fractional allocation to buyer $b$, ordered in increasing order of $v$. We have that:

\begin{lemma}\label{lem:structure-i}
For each buyer $b$, the vector $X^{(b)}$ consists of three consecutive blocks:
\begin{itemize}
    \item A (possibly empty) prefix of values $v$ where $x_{b,v}=0$, 
    \item followed by at most one value $v$ where $0 < x_{b,v} < q_{b,v}$ (the crucial value of $b$), 
    \item followed by a (possibly empty) suffix of values $v$ with $x_{b,v}=q_{b,v} > 0$ (the tight values of $b$).
\end{itemize}
\end{lemma}

\begin{proof}[Proof Sketch.]
The proof relies on a value exchange argument, showing that there is an optimal EA-LP solution where for any two buyers $b'$ and $b''$, with $S_{b''} \subseteq S_{b'}$, and for any two values $v'' \geq v'$, either $x_{b', v'} = 0$ or $x_{b'',v''} = q_{b'',v''}$ (or both -- the argument  also applies to the special case where $b'$ coincides with $b''$ and $v'' > v'$, which is relevant to this lemma). Otherwise, we could decrease $x_{b',v'}$ by $\delta = \min\{ x_{b', v'}, q_{b'',v''} - x_{b'',v''} \}$, while increasing $x_{b'',v''}$ by $\delta$. Thus, we increase the objective value of EA-LP, if $v'' > v'$ (or keep it unchanged, if $v'' = v'$), without violating any constraints. The item capacity constraints, in particular, remain satisfied, due to the assumption that $S_{b''} \subseteq S_{b'}$. The formal proof of Lemma~\ref{lem:structure-i} is based on this argument and can be found in Appendix~\ref{app:proofs}. 
\end{proof}

\paragraph{Fractional Allocation Aggregated by Bundle.}
To turn the intuition above into an anonymous pricing scheme, we aggregate the allocation of EA-LP by bundle and offer identical menu choices to all buyers demanding the same bundle. 
We recall that for any bundle $S$, $\mathcal{N}_S = \{b: S_b = S\}$ denotes the set of buyers interested in $S$. For each value $v \in \Supp$, we let $x_{S,v} = \sum_{b\in \mathcal{N}_S} x_{b, v}$ be the total fractional allocation of $S$ to buyers with realized value $v$. 

As before, we call a value $v\in \Supp$ \emph{tight} for a bundle $S$ if $x_{S,v} = q_{S,v} > 0$, and \emph{crucial}, if $0 < x_{S,v} < q_{S,v}$. For each bundle $S$, we consider the vector $X^{(S)} = (x_{S,v})_{v\in \Supp}$, whose entries correspond to the optimal fractional allocation of bundle $S$ at different values, ordered in increasing order of $v$. In Appendix~\ref{app:proofs}, we prove the following on the structure of $X^{(S)}$.

\begin{lemma}\label{lem:structure-S}
For each bundle $S$, the vector $X^{(S)}$ consists of three consecutive blocks:
\begin{itemize}
    \item a (possibly empty) prefix of values $v$ where $x_{S,v} = 0$,
    \item followed by at most one value $v$ with $0 < x_{S,v} < q_{S,v}$ (the crucial value of $S$),
    \item followed by a (possibly empty) suffix of values $v$ with $x_{S,v} = q_{S,v}$ (the tight values of $S$).
\end{itemize}
\end{lemma}

Our pricing menu is built around the \emph{important value} of a bundle $S$, denoted $w_S$, which is the largest value $w$ where $x_{S,w} < q_{S,w}$. If $x_{S,v} = q_{S,v}$ for all $v \in \Supp$, $S$'s important value is $w_S = 0$. We note that $w_S$ is the crucial value of $S$, if one exists, and $S$'s largest non-tight value (or $0$), otherwise.
\section{General Framework for Static and Anonymous Bundle Pricing}
\label{sec:framework}

We next present our general algorithmic framework, which serves as the basis for our static and anonymous bundle pricing mechanisms in the subsequent sections. At a high level, our framework proceeds in the following three steps.

\paragraph{Step 1: Scale Down EA-LP.}
\label{par:step1}
We first solve EA-LP, but with the item capacity constraints scaled down by a factor $\gamma > 1$ depending on the setting; i.e. $c(e)$ in the first EA-LP constraint becomes $c(e)/\gamma$. We let $\fopt_{\gamma}$ denote the new fractional optimum and observe that $\fopt_{\gamma} \geq \fopt/\gamma$.

\paragraph{Step 2: Construction of the Pricing Menu $M$.}
\label{par:step2}
We next create a (possibly randomized) menu $M$ from the solution of EA-LP, using Algorithm~\ref{alg:post-menu}. Our menu is randomized in the sense that the number of copies of a bundle $S$ at its crucial price may be a random variable (whose value is determined before any buyer arrives); importantly, the menu itself does not contain lotteries. The menu construction follows the structure of EA-LP's solution described by Lemma~\ref{lem:structure-S}. More specifically, we do not offer any copies of a bundle $S$ at values $v$ with $x_{S,v}=0$. We offer $\abs{\mathcal{N}_S}$ (i.e., practically unlimited) copies of $S$ at $S$'s smallest tight value $w_S+1$ in line~3 (so, availability of $S$ at price $w_S+1$ is only restricted by item capacities). Finally, if $S$'s important value $w_S$ is crucial, we offer a limited (and carefully chosen) number of additional copies of $S$ at price $w_S$. The number of such copies depends on $x_{S, w_S}$ and is the only case where we may use randomization. 

\begin{algorithm}[t]
\caption{Menu Construction \textbf{(Step 2)}}
\label{alg:post-menu}
\KwIn{Optimal solution $x$ of the $\gamma$-scaled-down EA-LP}

\For{each (desired) bundle $S$}{
    Let $w$ be the important value of $S$ (w.r.t. $x$)\;
    Offer $\abs{\mathcal{N}_S}$ (i.e., practically unlimited) copies of $S$ at price $w+1$\;

    \If{$S$ has a crucial value (which must be $w$) \textbf{and}
        $x_{S,w}\cdot w > \sum_{v \in \Supp: v > w} x_{S,v}\cdot v$}{
        \eIf{$x_{S,w} > 1$}{
            Offer $\lfloor x_{S,w} \rfloor$ additional copies of $S$ at price $w$\;
        }
        {
            With probability $\max\{x_{S,w},\, x_{S,w}/q_{S,w}\}$, offer one additional copy of $S$ at price $w$\;
        }
    }
}
\end{algorithm}

\paragraph{Step 3: Online Algorithm.}
We solve EA-LP and fix a pricing menu $M$, as described in steps~1~and~2 above. The values $v_b$ of all buyers $b \in \mathcal{N}$ are realized and an instance $I$ is fixed. Given $M$ and $I$, the almighty adversary determines the arrival order of the buyers in $\mathcal N$. Then, the online algorithm operates in a simple posted-pricing setting as follows: 
\begin{enumerate}
    \item Buyers arrive sequentially. When a buyer $b$ arrives, they examine the current menu $M$ and choose a subset $J$ (possibly empty) of available menu entries that maximizes their utility. 
    
    \item If $J\neq \emptyset$, for each chosen menu entry $(S_j, p_j, c_j) \in J$, we decrease its remaining multiplicity $c_j$ by $1$. If the residual capacities of all demanded items $e \in \bigcup_{j \in J} S_j$ are positive, we decrease all of them by $1$, buyer $b$ is deemed \emph{accepted} and the algorithm's social welfare increases by $v_b$. 
\end{enumerate}

We let $\alg_{M, I}$ denote either the set of buyers accepted by our algorithm or their total social welfare; the intended meaning will be clear from the context. Our goal is to prove that $\E{M, I}{\alg_{M, I}} \geq \alpha \cdot \fopt $ for some $\alpha>0$.

\paragraph{Price Subadditivity and Tie-Breaking.} We always assume that in step (3.1) above, every buyer $b$ selects a single menu entry that corresponds to their requested bundle $S_b$. In Appendix~\ref{app:subadditivity}, we show that this assumption is without loss of generality. More precisely, we show that our menu prices are subadditive; hence no buyer $b$ interested in $S_b$ has an incentive to select more menu entries whose union includes $S_b$. Moreover, we always assume that buyers break ties in favour of acceptance. Thus, if their value is $v$, they select a menu entry at price $v$. In Appendix~\ref{app:subadditivity}, we remove this assumption with a small price perturbation.

\subsection{The Unconstrained Algorithm and Its Properties}

We now introduce the \emph{Unconstrained Algorithm}, denoted $\ualg$, which completely ignores item capacities and is only restricted by the menu entries. Although the menu may induce total demand that exceeds some item capacities, $\ualg$ still accepts all such requests. The exact description of $\ualg$ can be derived from that of the online algorithm, in Step~3 above, by removing the phrase ``\emph{If the residual capacities\,$\ldots$\,of them by $1$}'' from step (3.2) (since $\ualg$ ignores item capacities). 

$\ualg$ is used only for the analysis. It may not be feasible, but it allows us to lower bound the expected social welfare of the online (capacity-feasible) algorithm. We use $\ualg_{M,I}$ to denote either the set of buyers accepted by the unconstrained algorithm or their total social welfare. $\ualg$ is just an analytical tool; it is supposed to operate on the same instance $I$, with the same buyer arrival order and with the same menu $M$ as the actual online algorithm $\alg$. Therefore, every buyer that belongs to $\alg_{M, I}$ must also belong to $\ualg_{M, I}$. 

We start with the first key property of $\ualg$, proven in Appendix~\ref{app:gen-frame}. 

\begin{lemma}[Property 1] \label{lem:ualg}
$ \E{M, I}{\ualg_{M, I}}\geq \fopt_\gamma/8$. 
\end{lemma}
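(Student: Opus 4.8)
I am proving Lemma~\ref{lem:ualg} (Property~1): $\E{M,I}{\ualg_{M,I}}\ge\fopt_\gamma/8$. The plan is to argue bundle by bundle. Since our menu prices are essentially subadditive (as established in the appendix on subadditivity), a buyer of $\mathcal{A}_S$ only ever purchases a menu entry selling exactly $S$, so the value $\ualg_{M,I}$ collects from the buyers of $\mathcal{A}_S$ depends only on $S$'s menu entries and on the realizations of the buyers in $\mathcal{A}_S$. As $\fopt_\gamma=\sum_S\fopt_S$, it suffices to show that for every desired bundle $S$ the expected value $\ualg_{M,I}$ collects from $\mathcal{A}_S$ is at least $\fopt_S/8$, and this will be proved uniformly over the arrival order the adversary may pick. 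Fix such an $S$, let $w=w_S$ be its important value, and split $\fopt_S=A+C$ with $A=\sum_{v>w}x_{S,v}\,v$ (the tight suffix of \Cref{lem:structure-S}, where $x_{S,v}=q_{S,v}$) and $C=x_{S,w}\,w$. The first, easy observation is that \Cref{alg:post-menu} always posts $\abs{\mathcal{A}_S}$ copies of $S$ at price $w+1$, which can never run out for the at most $\abs{\mathcal{A}_S}$ buyers of $\mathcal{A}_S$; hence in $\ualg$ \emph{every} buyer of $\mathcal{A}_S$ with realized value $>w$ is served at their full value, so the value collected from $\mathcal{A}_S$ is pointwise at least $\sum_{b\in\mathcal{A}_S:\,v_b>w}v_b$, whose expectation is $A$.

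This already settles two of the three cases. If $S$ has no crucial value then $C=0$ and $\fopt_S=A$, so we are done; if $S$ has crucial value $w$ but the test in \Cref{alg:post-menu} fails, i.e.\ $C\le A$, then $\fopt_S=A+C\le 2A$, which is at most twice the expected value collected from $\mathcal{A}_S$. So assume $S$ has crucial value $w$ and $C>A$; then $\fopt_S<2C$, and it suffices to collect at least $C/4$ in expectation from the value-$w$ buyers. Here the menu posts $L:=\floor{x_{S,w}}$ copies at price $w$ if $x_{S,w}>1$, and one copy at price $w$ with probability $p:=\max\{x_{S,w},\,x_{S,w}/q_{S,w}\}$ if $x_{S,w}\le 1$. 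Let $H$ and $W$ count the realized buyers of $\mathcal{A}_S$ with value $>w$ and with value $=w$; these are sums of independent indicators with $\E{}{H}=\sum_{v>w}x_{S,v}$ and $\E{}{W}=q_{S,w}>x_{S,w}$. The key structural fact is that, under \emph{any} arrival order, $\ualg$ sells exactly $\min(H+W,L)$ copies at price $w$ (every buyer with value $\ge w$ grabs one while available), each worth at least $w$, and every high-value buyer who misses a price-$w$ copy still buys at price $w+1$. Bounding a high-value buyer below by $w+1$ and the number of value-$w$ buyers served below by $\min(H+W,L)-\min(H,L)$, these combine to the pointwise inequality: the value from $\mathcal{A}_S$ is at least $w\cdot\min\big(H+W,\max(H,L)\big)\ge w\cdot\min(H+W,L)$; when $x_{S,w}\le 1$ the corresponding worst-case bound is $w\cdot\ind{\text{copy posted}}\cdot\ind{H=0}\cdot\ind{W\ge 1}$.

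The crux, and the step I expect to be the main obstacle, is to lower bound these quantities by a constant times $x_{S,w}$: the test $C>A$ only yields $\E{}{H}<\frac{w}{w+1}\,x_{S,w}$, which is close to $x_{S,w}$ when $w$ is large, so one cannot simply argue ``few high-value buyers realize, hence the price-$w$ copies go to value-$w$ buyers''. The resolution is a dichotomy together with an anti-concentration fact. For $x_{S,w}>1$: set $Z=H+W$; since $\E{}{Z}\ge q_{S,w}>x_{S,w}\ge L$ and a median of a sum of independent $\{0,1\}$ random variables lies within distance $1$ of its mean, $\Prob{Z\ge L}\ge 1/2$, hence $\E{}{\min(Z,L)}\ge L/2\ge x_{S,w}/4$ (using $\floor{y}\ge y/2$ for $y\ge 1$), and multiplying by $w$ gives the desired $C/4$. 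For $x_{S,w}\le 1$: if $\E{}{H}>1/2$ then $\Prob{H\ge 1}\ge 1-e^{-1/2}$, and since $\fopt_S<2C\le 2w<2(w+1)$ a single served high-value buyer already contributes at least $w+1>\fopt_S/2$, so the expected value from $\mathcal{A}_S$ exceeds $(1-e^{-1/2})\,\fopt_S/2>\fopt_S/8$; if $\E{}{H}\le 1/2$ then $\Prob{H=0}\ge 1/2$, and since conditioning on $H=0$ only raises the chance that some value-$w$ buyer appears, $\Prob{W\ge 1\mid H=0}\ge\Prob{W\ge 1}\ge 1-e^{-q_{S,w}}\ge\min\{1/2,\,q_{S,w}/2\}$, so the expected value-$w$ contribution is at least $w\cdot\frac12\cdot p\cdot\min\{1/2,q_{S,w}/2\}$, which is at least $wx_{S,w}/4=C/4$ in both subcases $q_{S,w}\ge 1$ (where $p=x_{S,w}$) and $q_{S,w}<1$ (where $p=x_{S,w}/q_{S,w}$). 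In every case the expected value $\ualg_{M,I}$ collects from $\mathcal{A}_S$ is at least $\fopt_S/8$; summing over all desired bundles $S$ and using $\fopt_\gamma=\sum_S\fopt_S$ yields $\E{M,I}{\ualg_{M,I}}\ge\fopt_\gamma/8$.
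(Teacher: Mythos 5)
Your proof is correct and follows essentially the same route as the paper's: a per-bundle decomposition mirroring the menu-construction cases (no crucial value, tight suffix dominating, and crucial mass dominating with subcases $x_{S,w}>1$ and $x_{S,w}\le 1$), with the Poisson-binomial median/anti-concentration fact supplying the needed probability bounds and the same final constant $1/8$. The only cosmetic difference is in the $x_{S,w}\le 1$ subcase, where you split on $\E{}{H}$ and condition on $H=0$ while the paper applies its \Cref{lem:PBD} directly to the count of value-$w$ buyers; both yield at least $\fopt_S/8$ per bundle.
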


Since $\ualg$ ignores item capacities, we need to bound how many copies of each item are allocated by $\ualg$. Let $L_{M, I}(e)$ be the random variable that denotes the number of copies of item $e$ allocated by $\ualg_{M, I}$. We define a collection of random variables that (as we will show) upper bound $L_{M, I}(e)$ and allow us to apply concentration inequalities.

\begin{definition} \label{def:Z}
For a bundle $S$, let $k_S$ be the (deterministic) number of additional copies of $S$ posted at $S$'s important value $w_S$ by Algorithm~\ref{alg:post-menu} ($k_S=\floor{x_{S,w_S}}$, if line~6 is executed, and $k_S=0$ otherwise), and let $Y_S\in\{0,1\}$ be the (possibly random) indicator that one additional copy of $S$ is posted at price $w_S$ (line~8). We define the following random variables for each bundle $S$ and each item $e$:
\[
Z_{M,I}(S) \coloneqq \sum_{b\in \mathcal{N}_S} \ind{v_b > w_S} + k_S + Y_S\cdot \ind{\exists b \in \mathcal{N}_S\,:\, v_b = w_S} 
\quad\text{and}\quad
Z_{M, I}(e) \coloneqq \sum_{S:e\in S} Z_{M,I}(S). \]
\end{definition} 

The following, proven in Appendix \ref{app:gen-frame}, shows that $\ualg$ respects item capacities in expectation. 

\begin{lemma}[Property 2] \label{lem:L(e)}
For every item $e \in \mathcal{M}$ and any scaling factor $\gamma>1$, it  holds that
\[
    L_{M,I}(e) \le Z_{M, I}(e)
    \quad\text{and}\quad
    \E{M, I}{L_{M,I}(e)}\le \E{M, I}{Z_{M, I}(e)} \le c(e) / \gamma\,.
\]

In addition, $Z_{M, I}(e)$ is a sum of negatively associated (NA) random variables.
\end{lemma}

Our analysis upper bounds the lost value from buyers that belong to $\ualg$ but not to $\alg$. Let 
\begin{equation}\label{eq:blocked}
\textsc{Blocked}_{M,I} = \left\{ b \in \mathcal{N} : b \in \ualg_{M,I} \text{ and } b 
\notin \alg_{M,I} \right\},
\end{equation}
be the set of not accepted buyers in $\ualg$ and let $\alg_{M, I}$ denote the set and the total value of the buyers that our algorithm accepts under realization $I$ and menu $M$. It holds that 
\begin{align}\label{eq:mainIdea}
    \E{M,I}{\alg_{M,I}}
    \;=\;
    \E{M,I}{\ualg_{M,I}}
    \;-\;
    \E{M,I}{\sum_{b \in \textsc{Blocked}_{M,I}} v_b }
    \tag{$\star$}
\end{align}
\section{\emph{d}-Single-Minded Combinatorial Auctions: The Proof of Theorem~\ref{thm:dsingle}}
\label{sec:dsingle}

Next, we apply our framework to the case where every buyer demands a single bundle $S_b \subseteq \mathcal M$ of size at most $d$ and prove Theorem~\ref{thm:dsingle}.
To this end, we follow the general framework presented in Section~\ref{sec:framework}. We begin by solving the $\gamma$-scaled down EA-LP, with $\gamma = e (10d)^{1/B}$, construct the randomized menu $M$ as in Algorithm~\ref{alg:post-menu} and apply the online algorithm presented in Step~3. 
 
\paragraph{Competitive Ratio.}
Lemma~\ref{lem:L(e)} and the choice of $\gamma = e(10d)^{1/B}$ immediately imply that:
\begin{corollary}\label{cor:d-single-minded}
For every item $e \in \mathcal M$, 
\(
    \E{M,I}{ Z_{M,I}(e) }
    \leq
    c(e)/\gamma
    =
    c(e) / e(10d)^{1/B}\,.
\)
\end{corollary}

We first show that the probability that a bundle $S$ allocated by the unconstrained mechanism is blocked due to item capacities is at most $0.1$.
Since $Z_{M,I}(e)$ is a sum of negatively associated binary random variables and upper bounds $L_{M, I}(e)$, as shown in Lemma \ref{lem:L(e)}, we apply the standard Chernoff bound to $Z_{M,I}(e)$.  Using $\mu_e = \E{M,I}{Z_{M,I}(e)} \leq c(e)/\gamma$, for every $e \in \mathcal M$, we obtain that
\[
    \Prob{ Z_{M,I}(e) \ge c(e) }
    \;\le\;
    \left( \frac{e \mu_e}{c(e)} \right)^{c(e)}
    \;\le\;
    \left( \frac{e}{\gamma} \right)^{c(e)}
    \;\le\; \frac{1}{10d} \,,
\]
by our choice of $\gamma$. Applying a union bound and $|S| \leq d$, we get that for every desired bundle $S$, 
\[
\Prob{\exists e\in S:\, Z_{M,I}(e)\ge c(e)}
\le \sum_{e\in S}\Prob{Z_{M,I}(e)\ge c(e)}
\le \abs{S}\cdot \frac{1}{10d}
\le 0.1,
\]

When a buyer arrives, they may prefer some menu entry (so they are accepted by $\ualg$), but they may not be accepted by the actual algorithm because one of the items in their bundle has exhausted its capacity. Then, we say that the buyer is \emph{blocked} by the algorithm (see also \eqref{eq:blocked}). The following bounds the expected social welfare lost due to blocked buyers and implies Theorem~\ref{thm:dsingle}.

\begin{lemma}\label{lem:d-blocked-small}
We have that 
\( \quad {\displaystyle
    \E{M,I}{\sum_{b \in \textsc{Blocked}_{M,I}} v_b}
    \;\le\;
    0.1\;
    \fopt_\gamma\,.}
\)
\end{lemma}

\begin{proof}
The event that a buyer $b$ is blocked depends, in a delicate way, on their value $v_b$ and the realized values of all other buyers (the realized values of all buyers comprise the instance $I$, both the buyers' arrival order determined by the almighty adversary and the algorithm's set of accepted buyers when $b$ arrives depend on $I$) and the randomized construction of the algorithm's menu $M$. The proof decouples the above sources of randomness via carefully selected events. 

We fix a bundle $S$ with important value $w = w_S$. We let $\mathcal{A}=\mathcal{N}_S$ be the set of buyers interested in $S$. We let $\mathcal{E}_b$ denote the event that buyer $b$ is not accepted due to exhausted item capacities. 

We upper bound the expected welfare lost due to blocked buyers in $\mathcal{A}$ by analyzing separately the loss due to blocked buyers with realized values larger than (resp. equal to) the important value $w$, denoted $T_{>w}$ and $T_{=w}$, respectively. Formally, we let 
\[
T_{>w}\coloneqq \E{M,I}{\sum_{b \in \mathcal{A}} v_b \ind{v_b > w}\ind{\mathcal{E}_b}}
\text{ and }T_{=w}\coloneqq w\cdot \E{M,I}{\sum_{b \in \mathcal{A}} \ind{v_b = w}\ind{b\in \ualg_{M,I}}\ind{\mathcal{E}_b}}
\]
It suffices to show that $T_{>w} + T_{=w} \leq 0.1\,\sum_{v > w} v\, x_{S,v} + 0.1\,w\,x_{S, w}$. Then, the lemma follows by applying linearity of expectation over all desired bundles $S$. 

For a fixed bundle $S$ and each item $e \in S$, we distinguish between the demand for $e$ due to allocated copies of bundles different from $S$, denoted $Z^{\neg S}_{M,I}(e) \coloneqq \sum_{T\neq S:\ e\in T} Z_{M,I}(T)$, and the demand for $e$ due to allocated copies of $S$, denoted $Z_{M,I}(S)$ (see also Definition~\ref{def:Z}). Then, $Z_{M,I}(e)=Z^{\neg S}_{M,I}(e)+Z_{M,I}(S)$. We observe that $Z^{\neg S}_{M,I}(e)$ depends only on the realized values of buyers not in $\mathcal{A}$ and the randomness in the menu construction for bundles $T\neq S$; it does not depend on $v_b$, of any $b \in \mathcal{A}$, or on $Y_S$. 

\paragraph{Bounding $T_{>w}$.}
To apply linearity of expectation in the definition of $T_{>w}$, we fix a buyer $b \in\mathcal{A}$. To decouple $b$'s realized value $v_b$ from the events $\ind{v_b > w}$ and $\ind{\mathcal{E}_b}$ (and these two from each other) in the resulting expectation, we define the following event as a substitute for $\mathcal{E}_b$:
\[
\mathcal{G}_b \;\coloneqq\;
\left\{\exists e\in S:\ 
Z^{\neg S}_{M,I}(e)
+\!\!\sum_{b'\in \mathcal{A}\setminus\{b\}}\!\!\!\ind{v_{b'}>w}
+ k_S
+ Y_S\cdot \ind{\exists b'\in \mathcal{A}\setminus\{b\}: v_{b'}=w} \,\ge\, c(e)\right\}
\]
(see Definition~\ref{def:Z} for the definition of $k_S$ and $Y_S$). Intuitively, the event $\mathcal{E}_b$ (i.e., that $b$ is rejected) implies $\mathcal{G}_b$. However, $\mathcal{G}_b$ does not depend on the buyers' arrival order or on $b$'s realized value $v_b$ (because it only depends on the realized values of buyers in $\mathcal{A}\setminus\{b\}$ and on the randomness in the menu construction). The following establishes the key property of (and the intuition behind) $\mathcal{G}_b$.

\begin{claim}
    $\ind{v_b>w}\ind{\mathcal{E}_b}\le \ind{v_b>w}\ind{\mathcal{G}_b}$.
\end{claim}
\begin{proof}
Whenever $v_b>w$, buyer $b$ is accepted by $\ualg_{M,I}$ because $M$ includes practically unlimited copies of $S$ at price $w+1$. We show that if additionally $\mathcal{E}_b$ occurs, then $\mathcal{G}_b$ occurs as well. If $\mathcal{E}_b$ occurs, when $b$ 
arrives there exists some $e\in S$ whose capacity has been exhausted by previously accepted buyers (other than $b$). The contribution of buyers not in $\mathcal{A}$ to the load of $e$ is at most
$Z^{\neg S}_{M,I}(e)$. The number of previously accepted buyers from $\mathcal{A}\setminus\{b\}$ is at most $\sum_{b'\in \mathcal{A}\setminus\{b\}}\ind{v_{b'}>w}$ plus at most $k_S$ buyers with value $w$ (getting the $k_S$ deterministic copies of $S$ with price $w$) plus possibly one additional buyer in $\mathcal{A}\setminus\{b\}$ with value $w$, 
if $Y_S=1$ and such a buyer exists. 
\end{proof}

Also, by Definition~\ref{def:Z}, for every $e\in S$ and every menu $M$ and instance $I$, we have that 
\[
Z^{\neg S}_{M,I}(e)
+\underbrace{\sum_{b'\in \mathcal{A}\setminus\{b\}}\!\!\!\ind{v_{b'}>w}
+ k_S
+ Y_S\cdot \ind{\exists b'\in \mathcal{A}\setminus\{b\}: v_{b'}=w}}_{\leq Z_{M,I}(S)}
\,\leq\, Z^{\neg S}_{M,I}(e)+Z_{M,I}(S) = Z_{M,I}(e)
\]
Therefore, 
\(
\Prob{\mathcal{G}_b} 
\le \Prob{\exists e\in S:\ Z_{M,I}(e)\ge c(e)} \le 0.1
\). 

Using independence of $\mathcal{G}_b$ and $v_b$, we obtain that
\[
\E{M,I}{v_b\ind{v_b>w}\ind{\mathcal{E}_b}}
\le
\E{M,I}{v_b\ind{v_b>w}\ind{\mathcal{G}_b}}
=
\Prob{\mathcal{G}_b}\cdot \sum_{v>w} v\,q_{b,v}
\le
0.1\sum_{v>w} v\,q_{b,v}.
\]

Applying linearity of expectation over all buyers $b\in\mathcal{A}$ and using that $w$ is the important value of $S$ (and thus, all values $v > w$ are tight for $S$), we conclude that 
\[
T_{>w}\ \le\ 0.1\sum_{b\in\mathcal{A}}\sum_{v>w} v\,q_{b,v}
\ =\ 0.1\sum_{v>w} v\,q_{S,v}
\ =\ 0.1\sum_{v>w} v\,x_{S,v},
\]

\paragraph{Bounding $T_{=w}$.}
We need to show that the expectation in the definition of $T_{=w}$ is at most $0.1\,x_{S,w}$. We let $p_S \coloneqq \Prob{Y_S=1}$ denote the probability that an additional copy of $S$ is posted at price $w$ (in line~8 of Algorithm~\ref{alg:post-menu},  see also Definition~\ref{def:Z} for $Y_S$). We observe that conditional on $k_S > 0$, $p_S=0$. Hence, we get that
%
$k_S + p_S\cdot \min\{1,q_{S,w}\} \le x_{S,w}$, which holds unconditionally. 

We let $N_w\coloneqq \sum_{b\in\mathcal{A}}\ind{v_b=w}$ be the random variable that denotes the number of buyers $b \in \mathcal{A}$ with $v_b = w$. A difficulty in bounding $T_{=w}$ arises from the need to consider all buyers $b \in \mathcal{A}$ with $v_b = w$ together, so that we can apply the upper bound of $k_S + Y_S \cdot \ind{N_w\ge 1}$ on the number of such buyers accepted by $\ualg$ (see also Definition~\ref{def:Z} for $k_S$). Hence, we define the following event as a substitute of the event that some buyers in $b \in \mathcal{A}$ with $v_b = w$ are not accepted: 
\[
\mathcal{F}_S \coloneqq
\left\{\exists e\in S:\ Z^{\neg S}_{M,I}(e) + \sum_{b\in \mathcal{A}}\ind{v_b>w} + k_S \ \ge\ c(e)\right\}
\]
We observe that $\mathcal{F}_S$ does not depend on the buyers' arrival order and that its occurrence is negatively associated with $N_w$ (because each buyer $b \in \mathcal{A}$ has either $v_b \le w$ or $v_b > w$, but not both). 
Whenever there exists a buyer $b \in \mathcal{A}$ with $v_b = w$ not accepted by the algorithm (i.e., the corresponding event $\mathcal{E}_b$ occurs), the event $\mathcal{F}_S$ occurs as well. 
Moreover, for every item $e\in S$, every realized instance $I$ and every realized menu $M$, the sum accounting for the load of $e$ in the definition of $\mathcal{F}_S$ does not exceed $Z_{M,I}(e)$. Therefore, \(
\Prob{\mathcal{F}_S} \le \Prob{\exists e\in S:\ Z_{M,I}(e)\ge c(e)} \le 0.1 \).

The total number of buyers $b\in \mathcal{A}$ with $v_b = w$ accepted by $\ualg$ (such buyers $b$ may trigger an event $\mathcal{E}_b$ in the definition of $T_{=w}$) is at most $k_S$ (via the $k_S$ deterministic copies of $S$ posted at price $w$ in  Algorithm~\ref{alg:post-menu}, line~6) plus at most one additional such buyer, in case where $Y_S=1$ and $N_w\ge 1$. Hence, for every realized instance $I$ and every realized menu $M$, we have that: 
\[
\sum_{b \in \mathcal{A}} \ind{v_b=w}\ind{b\in \ualg_{M,I}}\ind{\mathcal{E}_b}
\ \le\
\Big(k_S + Y_S\cdot\ind{N_w\ge 1}\Big)\cdot \ind{\mathcal{F}_S}.
\]

Taking expectations and using that the realization of $Y_S$ depends only on the randomness of an independent coin flip (for bundle $S$) in Algorithm~\ref{alg:post-menu}, line~8 (so $Y_S$ does not depend on the realized instance $I$ or on $\mathcal{F}_S$) and that $\ind{\mathcal{F}_S}$ and $\ind{N_w\ge 1}$ are negatively associated,) we obtain that:
\begin{equation}\label{eq:exp-tw}
\E{M,I}{\sum_{b \in \mathcal{A}} \ind{v_b=w}\ind{b\in \ualg_{M,I}}\ind{\mathcal{E}_b}}
\ \le\
\Big(k_S + p_S\cdot \E{M,I}{\ind{N_w\ge 1}}\Big) \cdot \Prob{\mathcal{F}_S}
\end{equation}

We observe that $\E{M,I}{\ind{N_w\ge 1}} \leq \min\{ 1, q_{S, w} \}$, since $\ind{N_w\ge 1}\le 1$ and $\ind{N_w\ge 1}\le N_w$, with  $\E{M,I}{N_w} = \sum_{b\in\mathcal{A}}\Prob{v_b=w} = q_{S, w}$. Therefore, the right-hand-side of \eqref{eq:exp-tw} can be upper bounded by $(k_S + p_S\cdot \min\{1, q_{S,w}\})\cdot \Prob{\mathcal{F}_S}$.

 Using that $\Prob{\mathcal{F}_S} \leq 0.1$ and that $k_S + p_S\cdot \min\{1,q_{S,w}\} \le x_{S,w}$, we conclude $T_{=w} \leq 0.1\,w\,x_{S, w}$\,. 
\end{proof}

Applying Lemma~\ref{lem:d-blocked-small}, \eqref{eq:mainIdea} and Lemma~\ref{lem:ualg}, we conclude the proof of Theorem~\ref{thm:dsingle}:
\[
    \E{M,I}{\alg_{M,I}}
    \;\geq\;
    \E{M,I}{\ualg_{M,I}}
    \;-\;
    0.1\,\fopt_{\gamma} \geq \frac{\fopt_\gamma}{40}
    \;\geq\; \frac{\fopt}{40e (10d)^{1/B}}
\]

\section{General Single-Minded Combinatorial Auctions: The Proof of Theorem~\ref{thm:general-single-minded}}
\label{sec:general-single-minded}

Next, we apply our general algorithmic framework to the case where the buyers are single-minded over arbitrary bundles and prove Theorem~\ref{thm:general-single-minded}.
%

The main challenge in the proof of Theorem~\ref{thm:general-single-minded} is that in certain (very low probability) instances, the realized value of some buyer may be extremely large, much larger than our fractional benchmark $\fopt_\gamma$. Then, the almighty adversary can arrange that the buyers with extremely large realized value arrive very last, so that it maximizes the probability that they are rejected by the algorithm (while they are accepted by $\ualg$). To deal with this case, we introduce an auxiliary ``\emph{small market}'' mechanism $\alg_2$, which posts a single bundle consisting of entire $\mathcal{M}$ at price $2\,\fopt_\gamma$. Our final mechanism randomizes between the auxiliary mechanism $\alg_2$ (with probability $2/3$) and our standard ``\emph{large market}'' mechanism $\alg_1$ obtained from the general framework in  Section~\ref{sec:framework} (with probability $1/3$). For $\alg_1$, we solve the $\gamma$-scaled down EA-LP, with $\gamma = e(20m)^{1/(B+1)}$, construct the randomized menu $M$ as in Algorithm~\ref{alg:post-menu} and apply the online algorithm presented in Step~3.


As in Section~\ref{sec:dsingle}, we need to upper bound the expected social welfare lost due to blocked buyers. To this end, we account for the contribution of buyers with extremely large and moderate realized value separately. Formally, we decompose (and upper bound) $\E{M,I}{\sum_{b \in \textsc{Blocked}_{M,I}} v_b}$ as follows: 
\begin{multline}\label{eq:general}
\E{M,I}{\sum_{b \in \textsc{Blocked}_{M,I}}
        v_b\,\ind{v_b < 2\fopt_\gamma}}
+
\E{M,I}{\sum_{b \in \textsc{Blocked}_{M,I}}
        v_b\,\ind{v_b \ge 2\fopt_\gamma}}
\\ 
\leq
2\,\fopt_\gamma \cdot \E{M,I}{\abs{\textsc{Blocked}_{M,I}}}
+
\E{M,I}{\sum_{b \in \mathcal{N}}
        v_b\,\ind{v_b\ge 2\fopt_\gamma}}
\end{multline}

\paragraph{Bounding the Number of Blocked Buyers.} 
We begin by analyzing the first term. The following, whose detailed proof is deferred to Appendix~\ref{app:proofs}, bounds the expected number of blocked buyers.

\begin{lemma}\label{lem:blocked-small}  
We have that 
\( \E{M,I}{\abs{\textsc{Blocked}_{M,I}}}
    \;\le\; 1/20  \)\,.
\end{lemma}

\begin{proof}[Proof Sketch]
A buyer is blocked only if some item is oversold, hence
\[
|\textsc{Blocked}_{M,I}|\le \sum_{e\in\mathcal M}(L_{M,I}(e)-c(e))_+
= \sum_{e\in\mathcal M}\sum_{j\ge1}\ind{L_{M,I}(e)\ge c(e)+j}.
\]
Taking expectations and using Lemma \ref{lem:L(e)}, a Chernoff bound for negatively associated (NA) variables with $\mu_e = \E{M,I}{Z_{M,I}(e)} \leq c(e)/\gamma$, gives that for every item $e \in \mathcal M$, 
\[
\Prob{L_{M,I}(e)\ge c(e)+j}\leq \Prob{Z_{M,I}(e)\ge c(e)+j}\le \frac{1}{20m}e^{-j}\,.\]
Summing over all $j\ge1$ yields $\E{M, I}{(L_{M,I}(e)-c(e))_+}\le 1/(20m)$. Summing over all items $e \in \mathcal{M}$ concludes the proof of the lemma.
\end{proof}

Plugging Lemma~\ref{lem:blocked-small} into \eqref{eq:general} and using \eqref{eq:mainIdea} and Lemma~\ref{lem:ualg}, we obtain that:
\begin{equation}\label{eq:large_market}
    \E{M,I}{\alg_{M,I}}
    \;\ge\;
        \frac{\fopt_\gamma}{40}
      - \sum_{b \in \mathcal{N}} \E{M,I}{v_b\,\ind{v_b\ge 2\fopt_\gamma}}\,.
\end{equation}

\paragraph{Dealing with Extremely Large Values.} 
The negative term of \eqref{eq:large_market} corresponds to buyers with extremely large realized values. To deal with them, we introduce auxiliary mechanism $\alg_2$, which posts a single menu entry consisting of all items,  priced at $2\,\fopt_\gamma$. We distinguish two cases.

\textbf{Case 1:} $\Prob{\exists b : v_b \ge 2\,\fopt_\gamma} > 1/2$.
Then, simply posting one copy of entire $\mathcal{M}$ at price $2\,\fopt_\gamma$ yields an expected social welfare larger than $\fopt_\gamma$, ensuring an $O(\gamma)$-approximation directly.

\textbf{Case 2:} $\Prob{\exists b : v_b \ge 2\fopt_\gamma} \leq 1/2$.
The following shows that in this case, the expected social welfare of $\alg_2$ is at least half of the negative term of \eqref{eq:large_market}. 

\begin{lemma}\label{lem:alg2-captures-large}
If \,$\Prob{\exists b : v_b \ge 2\,\fopt_\gamma} \leq 1/2$, then\ \,$
        \E{M,I}{\alg_2}
        \ge
        \frac{1}{2}\,\sum_{b \in \mathcal{N}}
        \E{M,I}{ v_b\,\ind{v_b\ge 2\,\fopt_\gamma} }.
    $
\end{lemma}

\begin{proof}
For every fixed realized instance $I$, we have that: 
$$
    \alg_2
    =
    \sum_{b \in \mathcal{N}}
        v_b\,\ind{v_b\ge 2\,\fopt_\gamma} \,
        \ind{\text{the posted copy is still available for } b}\,.
$$
Whenever all other buyers $b'\neq b$ have $v_{b'}<2\fopt_\gamma$, the posted copy of $\mathcal{M}$ is available for $b$. Thus,
\[
    \alg_2
    \;\ge\;
    \sum_{b \in \mathcal{N}} v_b\,\ind{v_b\ge 2\,\fopt_\gamma}\,
        \ind{\forall b'\neq b : v_{b'}<2\,\fopt_\gamma}\,.
\]
Taking expectations and using independence of the realized values of different buyers, 
\begin{align*}
    \E{I}{\alg_2}
    &\;\ge\;
    \sum_{b}
        \E{I}{v_b\ind{v_b\ge 2\,\fopt_\gamma}}\,
        \Prob{\forall b' \neq b : v_{b'}<2\,\fopt_\gamma}\\
    &\;\ge\;
    \frac{1}{2} \sum_b \E{I}{v_b\ind{v_b\ge 2\,\fopt_\gamma}},
\end{align*}
where the last inequality holds due to the definition of Case 2.
\end{proof}

\paragraph{Combined Algorithm and Competitive Ratio.} 
Let $\alg_1$ denote the mechanism generated by the EA-LP-based menu, and let the final algorithm $\alg$ run $\alg_1$ with probability $1/3$ and $\alg_2$ with probability $2/3$. Combining the previous bounds, we conclude the proof of Theorem~\ref{thm:general-single-minded}:
\begin{align*}
    \E{}{\alg}
    &\ge
    \frac{1}{3}
    \left(
        \frac{\fopt_\gamma}{40}
        -
        \sum_b \E{}{v_b\ind{v_b\ge 2\,\fopt_\gamma}}
    \right) + 
    \frac{2}{3}\cdot \frac{1}{2}
    \sum_b \E{}{v_b\ind{v_b
    \ge 2\,\fopt_\gamma}}
    \\
    &= \frac{\fopt_\gamma}{120}
    \;\geq\;
    \frac{\fopt}{120e(20m)^{1/(B+1)}}.
\end{align*}

\section{Online Network Routing: The Proof of Theorem~\ref{thm:graph}}
\label{sec:graph}

We next extend our framework to the slightly more general setting where items correspond to edges and buyers correspond to routing requests in a network. Let $G = (V,E)$ be directed network equipped with an integer capacity function $c : E \to \mathbb{Z}_{>0}$. Each buyer $b\in[n]$ is associated with a publicly known source-target pair $(s_b,t_b)$. Upon arrival, buyer $b$ draws a value $v_b$ from a known distribution $\mathcal{D}_b$. Buyer $b$ derives value $v_b$ if they receive bundle of edges that contains a path connecting $s_b$ to $t_b$; otherwise their value is zero. 

Let $\mathcal{P}_{s,t}$ be the set of all simple $s$-$t$ paths in $G$ and let $\mathcal{P}_{b} = \mathcal{P}_{s_b, t_b}$. As above, for each $b$, we let $q_{b,v} = \Prob{v_b = v}$. We write $p\ni e$ to denote that a path $p$ uses edge $e$.

As in the single-minded setting, we compare against the optimal prophet via the following fractional relaxation. The EA-LP now contains one variable for every possible routing choice:
\[
\begin{aligned}
\max\quad & \sum_{b\in[n]} \sum_{v\in\Supp} \sum_{p\in\mathcal{P}_b}
    x_{b,v,p}\cdot v \\
\text{s.t.}\quad 
& \sum_{b\in[n]} \sum_{v\in\Supp} 
    \sum_{p\in\mathcal{P}_b: e\in p} x_{b,v,p} 
    \;\le\; c(e), && \forall e\in E, \\
& \sum_{p\in\mathcal{P}_b} x_{b,v,p} \;\le\; q_{b,v}, 
    && \forall b\in[n],\, v\in\Supp, \\ 
& x_{b,v,p} \ge 0, && \forall b\in[n],\, v\in\Supp,\, p\in\mathcal P_b.
\end{aligned}
\]
As before, we solve (and use) a $\gamma$-scaled down of EA-LP, with $c(e)/\gamma$, instead of $c(e)$, on the rhs of the first constraint. We let $\fopt$ (resp. $\fopt_\gamma$) denote the value of the optimal fractional (resp. scaled down) EA-LP solution.

\paragraph{Buyer Types and Generalized Bundles.} 
A key observation is that any $s$-$t$ path satisfies all buyers with the same source-target pair $(s,t)$. This motivates treating each pair $(s,t)$ as a \emph{type} whose feasible bundles are all $s$-$t$ paths in $\mathcal{P}_{s,t}$. We introduce the notion of a \emph{generalized bundle} $B_{s,t}$, representing the act of serving an $(s,t)$-buyer. Each copy of $B_{s,t}$ is instantiated as a specific path $p \in \mathcal{P}_{s,t}$; different copies of bundle $B_{s,t}$ may correspond to different $s-t$ paths. The path associated with each copy is sampled before any buyer arrives from a distribution over $\mathcal{P}_{s,t}$ defined below.

For simplicity, we assume that every buyer $b$ with $(s_b,t_b) = (s,t)$ can only purchase copies of $B_{s,t}$%
\footnote{In Appendix~\ref{app:graph-subadd}, we remove this assumption by proving that the menu prices are subadditive; so the buyers never have an incentive to purchase a collection of paths whose union includes an $s-t$ path, instead of purchasing a copy of $B_{s,t}$.}.
Then, buyers of the same type behave like single-minded buyers w.r.t. a generalized ``bundle'' $B_{s,t}$. We may therefore group them as a single demand type and randomize over how many copies of $B_{s,t}$ to create, just as we randomize over bundle copies in the single-minded setting.

Crucially, all arguments that relate $\fopt$ to the expected value of the unconstrained algorithm (Property~1, Lemma~\ref{lem:ualg}) depend only on the total fractional allocation to each generalized ``bundle'' $B_{s,t}$ and not on the bundle's internal structure (i.e., how different copies of $B_{s,t}$ are routed through different $s-t$ paths). After grouping buyers by type, the analysis of $\ualg$ from the single-minded case carries over without any change. It remains to specify how (i)~we construct the menu for each $(s,t)$ type; and (ii) we route buyers who purchase $B_{s,t}$ so that the load bounds (Property~2, Lemma~\ref{lem:L(e)}) are preserved at the edge level.

\paragraph{Menu Construction and Routing.} 
We now describe the construction of the menu $M$ in the online network routing setting. For each pair $(s,t)$ with at least one buyer, we let
\(
\mathcal{N}_{s,t}
    \coloneqq \{ b \in [n] : (s_b,t_b) = (s,t)\}
\)
be the set of buyers of type $(s,t)$. We define
\(
    X_{(s,t), v}
    =
    \sum_{b\in\mathcal{N}_{s,t}} \sum_{p\in\mathcal{P}_{s,t}} x_{b,v,p}.
\)
and we write $x_{v, p} = \sum_{b\in \mathcal N_{s, t}}x_{b, v, p}$.
As in Section~\ref{sec:structure}, we can view $X_{s,t} = (X_{(s,t), v})_{v\in\Supp}$ as the aggregate EA-LP allocation over values for type $(s,t)$. Let $w_{s,t}$ be the important value of this vector, defined exactly as in the single-minded case%
\footnote{A value exchange argument, similar to that used in the proof of Lemma~\ref{lem:structure-S}, can be applied on paths in order to characterize the structure of $X_{s,t}$'s entries and to establish the existence of an important value$w_{s,t}$  for each generalized bundle $B_{s,t}$.}. 
Depending on whether $w_{s,t}$ is tight or crucial, we apply the same three-case rule as in Section~\ref{sec:framework}, Step~2, to determine the number of copies of $B_{s,t}$ that we post at prices $w_{s,t}$ and $w_{s,t} + 1$. 

Formally, the menu for type $(s,t)$ consists of $c_{s,t}$ copies of $B_{s,t}$, each priced either at $w_{s,t}$ or $w_{s,t}+1$, according to the case distinction in Algorithm~\ref{alg:post-menu}. This fully specifies the number of copies offered at each price for each generalized bundle $B_{s,t}$.

We now define how each copy of $B_{s,t}$ is instantiated as a concrete $s$-$t$ path. For every copy of $B_{s,t}$, we independently sample a path $p \in \mathcal{P}_{s,t}$ using roulette drawing with probability proportional to EA-LP allocation (we omit the $(s, t)$ index, because the $(s, t)$ pair is fixed):

\[
    \Prob{\mbox{path $p$ is chosen}} = \frac{\sum_{v \ge w} x_{v,\,p}}{\sum_{p' \in \mathcal{P}_{s,t}} \sum_{v \ge w}
        x_{v,\,p'}}.
\]

All copies are sampled independently of each other. Once the menu is fixed, when
a buyer of type $(s,t)$ purchases a copy of $B_{s,t}$, they receive the pre-sampled
path associated with that copy.

This completes the definition of the menu and the routing in the online network routing setting.

\paragraph{Edge Load.} We now show that the expected load induced by this construction satisfies the same kind of bound as in Lemma~\ref{lem:L(e)}, yielding Property~2 for the load of network edges.

For a fixed type $(s,t)$ and edge $e$, let $L_{s,t}(e)$ denote the total load on edge $e$ due to buyers of type $(s,t)$ under the (unconstrained) execution of online algorithm with the path allocation menu defined above. Let $P_{s,t}^e \subseteq \mathcal{P}_{s,t}$ be the set of $s$-$t$ paths that use edge $e$. The following lemma can proved similarly to Lemma~\ref{lem:L(e)} (the proof is deferred to Appendix \ref{sec:graphload}). 

\begin{lemma}
\label{lem:graph-load}
For every edge $e\in E$, the path allocation of the unconstrained algorithm with the routing menu guided by the $\gamma$-scaled down EA-LP and constructed as above satisfies:
\[
    \E{M,I}{L_{s, t}(e)}
    \;\le\;
    \E{M,I}{Z_{s, t}(e)}
    \;\le\;
    \sum_{b \in \mathcal{N}_{s,t}} \sum_{v\in \Supp} \sum_{p\in P_{s,t}^e} x_{b,v,p}
    \;\le\;
    \frac{c(e)}{\gamma},
\]
where $Z_{s, t}(e)$ is a sum of negatively associated random variables.
\end{lemma}

Combining Lemma~\ref{lem:graph-load} with the expected social welfare guarantee for the unconstrained algorithm of Lemma~\ref{lem:ualg}, and using the value of $\gamma$ and the analysis used for the general single-minded setting in Section~\ref{sec:general-single-minded}, we obtain a competitive ratio of $O(m^{1/(B+1)})$ for online network routing (Theorem~\ref{thm:graph}).

\section{Lower Bounds}
\label{sec:lower_bounds}

Next, we prove Theorem~\ref{thm:lb-gen} and give a proofsketch of Theorem \ref{thm:lb-d}. The full proof of Theorem \ref{thm:lb-d} along with refined lower bounds for the case of unit capacities are presented in Appendix~\ref{app:lower-bounds}.

\subsection{The Proof of Theorem \ref{thm:lb-gen}}

It suffices to consider the case where $1\le B<\ln m$ (if $B\ge \ln m$, the trivial $\Omega(1)$ bound implies the claim). We use $r$-way qualitatively independent (QI) partitions introduced by \citet{renyi}.

\begin{definition}[$r$-way QI partitions]
Let $U$ be an $n$-element set. A partition of $U$ into $t$ classes is
$\mathcal{P}=\{P_1,\dots,P_t\}$, where the different classes are mutually disjoint and their union is $U$. A family
$
\mathcal{F}=\{\mathcal{P}^{(1)},\dots,\mathcal{P}^{(k)}\}
$
of $t$-partitions is \emph{$r$-way qualitatively independent} if for every choice of $r$
distinct partition indices $j_1,\dots,j_r\in[k]$ and $r$ class indices $a_1,\dots,a_r\in[t]$, one for each partition, it holds that 
$
P^{(j_1)}_{a_1}\cap\cdots\cap P^{(j_r)}_{a_r}\neq\emptyset.
$
Let $g(n,t,r)$ be the maximum size of such a family.
\end{definition}

\citet{poljak} proved the following.
\begin{theorem}[{\cite[Theorem~5]{poljak}}]\label{thm:poljak-again}
For all integers $n,t,r\ge 1$,
\[
g(n,t,r)\ \ge\ \frac{r}{et}\cdot \exp\!\left(\frac{n}{r t^r}\right).
\]
\end{theorem}

In the following, we set $n=m$ and $r=B+1$, and choose $t$ so that $g(m,t,B+1)\ge B\,t^t$. The proof is by direct substitution into Theorem \ref{thm:poljak-again} and is deferred to
Appendix~\ref{app:proofs}.

\begin{lemma}\label{lem:N-vs-tt}
For every $m\ge 16$ and $1\le B<\ln m$, define
\[
t=t_{m,B}\coloneqq \left\lfloor\left(\frac{m}{2B\ln m}\right)^{1/(B+2)}\right\rfloor.
\]
Then $g(m,t,B+1)\ge B\,t^{\,t}$.
\end{lemma}

\paragraph{Lower-Bound Instance.}
There are $m$ items and exactly $B$ copies of each item. Let $t=t_{m,B}$ and $N:=B\,t^t$.
By Lemma \ref{lem:N-vs-tt}, there exists a $(B+1)$-way QI family
$\{\mathcal{P}^{(1)},\dots,\mathcal{P}^{(N)}\}$ of $t$-partitions of $[m]$,
where $\mathcal{P}^{(i)}=\{P^{(i)}_1,\dots,P^{(i)}_t\}$.
For each $(i,a)\in[N]\times[t]$, we introduce a single-minded buyer $b_{i,a}$ who is interested in $P^{(i)}_a$ and has value $1$ with probability $1/t$ (and $0$ otherwise), independently over all $(i,a)$. We let $E_i$ denote the event that all $t$ buyers in group $i$ have value $1$.

The following two observations follow from the definition of the lower bound instance. 

\begin{observation}\label{ob:at-most-B-groups}
In any feasible allocation (respecting $B$ copies per item), it is impossible to serve buyers from more than $B$ distinct groups.
\end{observation}

\begin{observation}\label{ob:serve-B-groups}
If the events $E_{i_1},\dots,E_{i_B}$ occur for $B$ distinct groups, then there exists a feasible allocation that serves all $tB$ buyers in these groups.
\end{observation}

\paragraph{Analysis.}
We recall that $E_i$ denotes the event that all $t$ buyers in group $i$ have value $1$. Then, $\Prob{E_i}=p=(1/t)^t$. For $X:=\sum_{i=1}^N \ind{E_i}$, we have that $X\sim\mathrm{Bin}(N,p)$ and that $\mathbb{E}[X]=Np=B$. From the second case of the proof of Lemma~\ref{lem:PBD}, we get that $\Prob{X\ge B}\ge 0.5$. \qed

\paragraph{Offline versus Online.}
If $X\ge B$, then by Observation~\ref{ob:serve-B-groups}, the social welfare of the offline optimum is $B\,t$. Hence, the expected social welfare of the prophet is at least $0.5\,B\,t$.
By Observation \ref{ob:at-most-B-groups}, any online algorithm can collect value from at most $B$ groups. Specifically, for every group $i$, the number of buyers with positive value is $\mathrm{Bin}(t,1/t)$. Hence, once the algorithm commits to accepting buyers from some group, the expected remaining value to accept from that group is at most $1$.
Thus, $\mathbb{E}[\alg]\le 2B$, which implies a competitive ratio of at least $t/4$. 
For the proof of Theorem \ref{thm:lb-gen}, we use 
\[ t = \left\lfloor \left(\frac{m}{2B\ln m}\right)^{1/(B+2)}\right\rfloor \ge \frac{1}{2} \left(\frac{m}{2B\ln m}\right)^{1/(B+2)}\,. \]
Observing that $(2B)^{1/(B+2)}\le 2$, for all $B\ge 1$, we get that 
$4t\ge (m/\ln m)^{1/(B+2)}$. Therefore, we obtain the following lower bound on the competitive ratio of any online algorithm, thus concluding the proof Theorem \ref{thm:lb-gen}:
\[
\frac{1}{16}\left(\frac{m}{\ln m}\right)^{1/(B+2)} \,.
\]

\subsection{Proof Sketch of Theorem \ref{thm:lb-d}}

The proof is similar to the proof of Theorem \ref{thm:lb-gen}. The only new ingredient is enforcing $\abs{S_b}\le d$.
We do so by replacing Poljak's partitions with \emph{$d$-balanced} partitions (all classes have size at most
$d$), obtained by sampling balanced labelings (with each label appearing exactly $d=n/t$ times). Using negative association, we get the balanced analogue $g_{\mathrm{bal}}(n,t,r,d)\ \ge\ er/t\cdot \exp\!\left(\frac{n}{r t^r}\right)$. 

Setting $r=B+1$, $n=m=dt$ and
$
t=\lfloor\big(d/({2B\ln d})\big)^{1/(B+1)}\rfloor
$
gives a $(B+1)$-way QI family of size $N=B\,t^t$ of $t$-partitions of the $m$ items with class size at most $d$. The buyers are defined exactly as before (we have one buyer per class, their value is $1$ with probability $1/t$). So, every desired bundle has size at most $d$. The binomial argument and the two structural observations remain valid and imply a lower bound of $\Omega(t)=\Omega((d/\ln d)^{1/(B+1)})$ on the competitive ratio of any online algorithm. A detailed proof of Theorem \ref{thm:lb-d} can be found in Appendix~\ref{app:lower-bounds}. \qed

\section{Conclusion}

We introduced a unified approach for online resource allocation with strong complementarities via \emph{static} and \emph{anonymous} bundle-pricing. Our mechanisms are guided by an ex-ante LP relaxation with scaled down capacities, and translate the fractional solution’s threshold structure into posted bundle prices. This yields improved guarantees as multiplicity grows: an $O(d^{1/B})$-competitive mechanism for $d$-single-minded instances, and $O(m^{1/(B+1)})$-competitiveness for general single-minded combinatorial auctions and for online network routing. We also prove information-theoretic lower bounds via an embedding of the classical prophet-inequality hard instance using qualitative-independence constructions, showing that a similar dependence on $m$ (or $d$) and $B$ is essentially unavoidable. Two directions seem particularly interesting: (i)~closing the remaining gaps for both the general and the $d$-single-minded setting; and (ii)~studying sample-based implementations, where the value distribution of the buyers are implicitly inferred from a small (polynomially many to begin with, ideally a constant) number of samples. 


\bibliographystyle{plainnat}
\bibliography{main}

\newpage
\appendix

\section*{\LARGE\bf Appendix}
\medskip
\section{Different Distribution Supports} \label{app:support-assumption}

Our results extend to the setting where each buyer $b$ has a distribution $\mathcal D_b$ over its own support $\Supp_b$ and by a standard scaling argument we can assume that $\Supp_b \subseteq \mathbb{Z}_{\ge 0}$ without loss of generality. We extend each buyer-specific support $\Supp_b$ to a common support $\Supp$, run our $c$-approximation algorithm in the resulting same-support instance, and post the induced randomized menu. We show that with the appropriate extension, this (randomized) menu yields a $4c$-approximation for the original problem with different supports.

Let $v_{\max}^{(b)} = \max_{v\in \Supp_b} v$ and $v_{\max} = \max_{b} v_{\max}^{(b)}$. Define $\Supp = \{0,\ldots ,v_{\max}\}\supseteq\cup_b \Supp_b$. To reduce to the equal-support case, we modify each distribution $\mathcal D_b$ so that all of them share the same support~$\Supp$. For each buyer $b$ with $\Supp\setminus\Supp_b \neq \emptyset$, define the modified distribution $\mathcal D_b'$ by
\[
q'_{b,v^{(b)}_{\max}}
    \;=\;
    q_{b,v^{(b)}_{\max}} - \varepsilon \cdot |\Supp \setminus \Supp_b|,
\qquad
q'_{b,v} = \varepsilon
    \quad \forall v \in \Supp \setminus \Supp_b,
\]
and leave all other $q_{b,v}$ unchanged. The parameter $\varepsilon>0$ is chosen sufficiently small so that
\[
0 < q'_{b,v^{(b)}_{\max}}
\quad\text{i.e. }\quad
\varepsilon \le C_1 \coloneqq
    \min_b \frac{q_{b,v^{(b)}_{\max}}}{2\,|\Supp \setminus \Supp_b|}.
\]

The modified distribution $\mathcal D_b'$ is a valid distribution, and all $\mathcal D_b'$ share common support $\Supp$. Define $\mathcal{D} \coloneqq \mathcal{D}_1\times\ldots\times \mathcal{D}_n$ and $\mathcal{D'} \coloneqq \mathcal{D}'_1\times\ldots\times \mathcal{D}'_n$.

Let $\fopt$ and $\fopt'$ be the fractional LP optima under the original and modified distributions, respectively. Now take an optimal LP solution $x$ for $\mathcal D$ and decrease $x_{b,v^{(b)}_{\max}}$ by 
$$\min \left\{x_{b,v^{(b)}_{\max}}, \varepsilon|\Supp\setminus \Supp_b|\right\}$$ 
for each buyer $b$; the resulting solution is feasible for $\mathcal D'$ and loses at most $\varepsilon\abs{\Supp\setminus \Supp_b}\, v_{\max}^{(b)}$ in value for each buyer $b$. Thus,
\[
\fopt'
    \ge
    \fopt
    \;-\;
    \varepsilon \sum_b |\Supp\setminus \Supp_b|\, v^{(b)}_{\max}.
\]
Thus, by choosing
\[
\varepsilon
    \le C_2 \coloneqq
    \min\!\left\{
        C_1,\;
        \frac{\fopt}{2\sum_b |\Supp\setminus \Supp_b|\, v^{(b)}_{\max}}
    \right\},
\]
we guarantee $\fopt' \ge \frac12 \fopt$.

Let $I \sim \mathcal D$ and $I' \sim \mathcal D'$ denote value realizations under the
original and modified distributions. For the total variation distance between $\mathcal{D}_b, \mathcal{D}_b'$ for each buyer $b$ we get by definition that $d_{\mathrm{TV}}(\mathcal D_b,\mathcal D_b') \;=\; \varepsilon\,|\Supp\setminus \Supp_b|$. Thus, since $\mathcal{D}, \mathcal{D}'$ are product distributions we get 
\[
d_{\mathrm{TV}}(\mathcal D,\mathcal D')
\;\le\;
\sum_b d_{\mathrm{TV}}(\mathcal D_b,\mathcal D_b')
\;=\;
\varepsilon\sum_b |\Supp\setminus \Supp_b|.
\]
For any fixed menu $M$ and any realization $I$, the welfare of our algorithm is
bounded as
\[
0 \;\le\; \alg_{M,I} \;\le\; n \cdot v_{\max}
\;\eqqcolon\; V_{\max}^{\mathrm{tot}}.
\]
Therefore, for any fixed $M$, using a standard property of the total variation distance (see Proposition 4.5 in \cite{levin_peres_wilmer_markovmixing}), we get
\[
\Bigl|\E{I}{\alg_{M,I}} - \E{I'}{\alg_{M,I'}}\Bigr|
\;\le\;
d_{\mathrm{TV}}(\mathcal D,\mathcal D')\cdot V_{\max}^{\mathrm{tot}}
\;\le\;
\varepsilon\!\left(\sum_b |\Supp\setminus \Supp_b|\right)\! V_{\max}^{\mathrm{tot}}.
\]
Taking expectation over the random menu $M$ yields
\[
\Bigl|\E{M,I}{\alg_{M,I}} - \E{M,I'}{\alg_{M,I'}}\Bigr|
\;\le\;
\varepsilon\!\left(\sum_b |\Supp\setminus \Supp_b|\right)\! V_{\max}^{\mathrm{tot}}.
\]
Thus, by additionally choosing
\[
\varepsilon \;\le\; 
C_3 \coloneqq
\min \left\{C_2, \frac{\fopt}{4c\,\left(\sum_b |\Supp\setminus \Supp_b|\right)\! V_{\max}^{\mathrm{tot}}}\right\},
\]
we get
\[
\E{M,I}{\alg_{M,I}}
\;\ge\;
\E{M,I'}{\alg_{M,I'}}
\;-\;
\frac{\fopt}{4c}
\;\ge\;
\,\frac{\fopt'}{c}
\;-\;
\frac{\fopt}{4c}.
\]
Combining with $\fopt' \ge \frac12\fopt$ gives
\[
\E{M,I}{\alg_{M,I}}
\;\ge\;
\frac{1}{4c}\,\fopt.
\]
Hence, a $c$-approximation algorithm for the equal-support case yields a $4c$-approximation algorithm for the general case, thus we can simply run our algorithm on the modified distributions $\mathcal D_b'$ and post the resulting randomized menu, justifying our assumption.

\section{Price Subadditivity} \label{app:subadditivity}

We now relax the simplifying assumption that a buyer who desires a bundle $S$ is only interested in $S$ copies. In principle, a buyer could purchase several bundles $T_1,\dots,T_\ell$ from the posted menu such that $\bigcup_i T_i \supseteq S$. We show that after a simple normalization step applied to the fractional LP solution, no buyer can ever strictly prefer such a cover over purchasing $S$ directly. In other words, the posted prices are subadditive.

\paragraph{Pre-processing.}
After solving the scaled-down EA-LP of Step~1, in Section~\ref{sec:framework}, let $\mathcal H$ denote the set of feasible fractional solutions $x$ attaining value $\fopt_\gamma$. Among these, we select a canonical solution by solving the secondary LP
\[
    \min \;\Bigl(\textsc{FracWeight}(x) + \sum_{e\in\mathcal M} w_e(x)\Bigr)
    \qquad\text{s.t. } x\in \mathcal H,
\]
where
\[
    \textsc{FracWeight}(x) = \sum_b\sum_{v\in\Supp} x_{b,v},     
    \qquad
    w_e(x) = \sum_{b:e\in S_b}\sum_{v\in\Supp} x_{b,v}.
\]
We henceforth work only with optimal solutions of this normalization LP. We first prove the following lemma. 

\begin{lemma}\label{lem:important-subadd}
    Fix a desired set $S$ and a value $v$ such that $x_{S, v} < q_{S, v}$. 
    Suppose there exist different desired sets $T_1,\dots,T_\ell$ such that $\bigcup_{i=1}^\ell T_i \supseteq S$.
    If $x_{T_i, v_i}>0$ for all $i\in [\ell]$ and some $v_i$, then
    \[
        \sum_{i=1}^\ell v_i > v .
    \]
\end{lemma}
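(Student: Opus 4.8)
The statement asserts a "value-exchange" inequality: if a desired set $S$ can be covered by a collection of other desired sets $T_1,\dots,T_\ell$, each of which carries positive LP mass at some value $v_i$, then the total of those values must strictly exceed any value $v$ at which $S$ is not yet saturated (i.e.\ $x_{S,v}<q_{S,v}$). The natural strategy is a contradiction argument via a feasibility-preserving local perturbation of the LP solution, exactly in the spirit of the value-exchange reasoning from \Cref{sec:structure}, but now exploiting the normalization LP from the pre-processing step to rule out the degenerate cases.

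First I would suppose for contradiction that $\sum_{i=1}^\ell v_i \le v$. The idea is to perform a simultaneous exchange: for a small $\delta>0$, decrease each $x_{T_i,v_i}$ by $\delta$ and increase $x_{S,v}$ by $\delta$. This is possible for $\delta$ small enough because $x_{T_i,v_i}>0$ for every $i$ and $x_{S,v}<q_{S,v}$, so all box constraints $0\le x_{b,v}\le q_{b,v}$ remain satisfied after redistributing the $\delta$ mass among the buyers in $\mathcal A_S$ (there is room since $x_{S,v}<q_{S,v}$ means $\sum_{b\in\mathcal A_S} x_{b,v} < \sum_{b\in\mathcal A_S} q_{b,v}$). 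For the capacity constraints: fix an item $e$. If $e\notin S$, the move only decreases loads (we removed mass from the $T_i$'s and added mass to $S$ which doesn't contain $e$), so that constraint is fine. If $e\in S$, then $e\in T_i$ for at least one index $i$ since $\bigcup_i T_i\supseteq S$; the load on $e$ increases by $\delta$ from the $S$-side but decreases by $\delta$ for each $T_i\ni e$ (at least one), so the net change is $\le 0$. Hence the perturbed solution is LP-feasible.

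Next I would track the objective. The change in LP value is $\delta\bigl(v-\sum_{i=1}^\ell v_i\bigr)\ge 0$ by the contradiction hypothesis. If the inequality were strict, we would have produced a feasible solution of value exceeding $\fopt_\gamma$, a contradiction; so we may assume $\sum_i v_i = v$ and the LP value is unchanged, i.e.\ the perturbed solution still lies in $\mathcal H$. This is where the normalization LP enters: the perturbation strictly decreases $\textsc{FracWeight}(x)$ by $(\ell-1)\delta$ when $\ell\ge 2$ (we remove $\ell\delta$ units and add back $\delta$), and it does not increase any $w_e(x)$ (by the per-item analysis above every load is non-increasing). When $\ell\ge 2$ this strictly decreases the normalization objective $\textsc{FracWeight}(x)+\sum_e w_e(x)$, contradicting the choice of $x$ as a minimizer. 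The remaining case $\ell=1$, where $T_1\supseteq S$, $v_1\le v$, and $x_{T_1,v_1}>0$, has to be handled too: here $\textsc{FracWeight}$ is unchanged, but moving mass from $T_1$ to the smaller set $S$ strictly decreases $w_e(x)$ for every $e\in T_1\setminus S$; if $T_1\setminus S\neq\emptyset$ this again contradicts minimality, and if $T_1=S$ then $T_1$ is not a "different" desired set, excluded by hypothesis.

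The main obstacle I anticipate is the bookkeeping needed to make the perturbation rigorous in the presence of multiple buyers per desired set — one must redistribute the $\delta$ mass among buyers in $\mathcal A_{T_i}$ (decreasing) and $\mathcal A_S$ (increasing) while respecting every individual box constraint, and argue that enough "room" exists; the aggregate inequalities $x_{T_i,v_i}>0$ and $x_{S,v}<q_{S,v}$ are exactly what guarantees this, but it needs to be spelled out. A secondary subtlety is ensuring the $T_i$ are genuinely distinct from $S$ (given) so that the $\ell=1$ corner case collapses cleanly, and confirming that the "at least one $T_i$ contains $e$" step is the only place covering is used.
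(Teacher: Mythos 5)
Your proposal is correct and follows essentially the same route as the paper: assume $\sum_i v_i \le v$, perform the mass exchange from the $T_i$'s to $S$ (feasible by the covering property and $x_{S,v}<q_{S,v}$), derive a contradiction with LP optimality when the inequality is strict, and with the minimality of the normalization objective (FracWeight decrease for $\ell\ge 2$, load decrease on $T_1\setminus S$ for $\ell=1$) when it is an equality. The per-item feasibility bookkeeping you flag is exactly the level of detail the paper leaves implicit, so there is no gap.
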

\begin{proof}
    Suppose, for the sake of contradiction, that $\sum_{i=1}^\ell v_i \le v$. First, if $v=0$ the inequality trivially holds, thus we can assume that $v > 0$. Then, if $\sum_{i=1}^\ell v_i < v$, since $v_1,\dots,v_\ell$ are crucial or tight values and $v$ is non-tight we can decrease $\epsilon>0$ mass (for sufficiently small $\epsilon$) from each $x_{b_i, v_i}>0$ and increase $x_{b_j, v}$ by $\epsilon$, for some choice of buyers $b_i\in \mathcal{N}_{T_i},\,b_j\in \mathcal{N}_S$ for $i\in[\ell]$. Consequently, we remain feasible while increasing the objective; contradiction.

    Thus suppose that $\sum_{i=1}^\ell v_i = v$. We will get contradiction due to the objective of the pre-processing step. Indeed we can do the same transfer as before and get, 
    \begin{itemize}
    \item If $\ell=1$ then $T_1\supsetneq S$, so the load $w_e(x)$ decreases by $\epsilon$ on every item $e\in T_1\setminus S$, while $\textsc{FracWeight}(x)$ is preserved. Thus the total objective decreases, contradiction.
    \item If $\ell>1$ then the value shifted from $\ell$ different buyers is consolidated into a single buyer, so $\textsc{FracWeight}(x)$ decreases by $(\ell-1)\epsilon$, while each $w_e(x)$ remains the same or decreases. Again, the total objective decreases, contradiction.
    \end{itemize}
\end{proof}

\paragraph{Subadditivity.}
Fix a buyer whose desired set is $S$. Suppose that in the posted menu they purchase bundles $T_1,\dots,T_\ell$ and 
\[
\bigcup_i T_i \supseteq S \quad \text{and} \quad \sum_{i=1}^\ell p(T_i) < p(S) \tag{$\perp$} \label{eq:TvsS}
\]

Note that since we always offer $\abs{\mathcal{N}_S}$ copies of set $S$ at some price, a copy of $S$ will always be available. Indeed, for the buyer that arrives first this trivially holds and since we will prove that the menu is subadditive they will not purchase bundles of any other desired set, thus the $\abs{\mathcal{N}_S}$ are enough for each desired set $S$ (they are dedicated to buyers that desire $S$). We could offer $n$ copies (where $n$ is the total number of buyers) and avoid the above argument but $\abs{\mathcal{N}_S}$ copies suffice.

First, if $p(S)=1$ then $p(T_i)=0$, contradiction. Now notice that $x_{T_i, p(T_i)} > 0$ and $x_{S, p(S)-1}$ is not tight by the definition of our pricing rules. Thus, we can apply Lemma \ref{lem:important-subadd} with $v_i = p(T_i)$ and $v = p(S)-1$ and get
\[
\sum_{i=1}^\ell p(T_i) > p(S)-1,
\]
that contradicts with \eqref{eq:TvsS}.


\paragraph{Tie-Breaking.} We now show that no tie-breaking is needed by a small price perturbation. More specifically, instead of posting price $p$ for a copy we post price $f(p) = (1-\delta )\cdot p$, for $\delta < 1/(2R)$, where $R = v_{\max}$. Indeed $p-1 < f(p) < p$ thus $p$ value buyers now gain positive utility by purchasing the copy, so no tie-breaking is needed, while buyers with value $<p$ still do not afford buying the copy. Note that subadditivity of the menu still holds since the $(1-\delta )$ factor cancels out.

\section{Analysis of the General Algorithmic Framework} \label{app:gen-frame}

\subsection{Proof of Lemma \ref{lem:ualg}}\label{proof:ualg}

\begin{proof}
Fix a desired bundle $S$ and write $w = w_S$ for its important value. Let $\fopt_S \;=\; \sum_{b \in \mathcal{N}_S} \sum_{v \in \Supp} x_{b,v}\,v$ be the contribution of buyers in $\mathcal{N}_S$ to the scaled-down fractional optimum~$\fopt_\gamma$, and $x_{S,v} \;=\; \sum_{b\in\mathcal{N}_S} x_{b,v}$,  $q_{S,v} \;=\; \sum_{b\in\mathcal{N}_S} q_{b,v}$. By Lemma~\ref{lem:structure-S}, the vector $X^{(S)} = (x_{S,v})_{v\in\Supp}$ consists of a prefix of zeros, possibly a single crucial value $w$, and then a suffix of tight entries $x_{S,v}=q_{S,v}$ for all $v > w$. We now analyze the expected value obtained by the unconstrained algorithm $\ualg$ from buyers in $\mathcal{N}_S$ under the menu of Step~2, in Section~\ref{sec:framework}.

\medskip
\noindent
\textbf{Case 1: \emph{S} has no crucial value.}
Then $x_{S,v} = 0$ for $v \le w$ and $x_{S,v} = q_{S,v}$ for all $v > w$. The menu offers $|\mathcal{N}_S|$ copies of $S$ at price $w+1$. Since there are at most $|\mathcal{N}_S|$ buyers wanting $S$, no buyer is ever blocked in $\ualg$, and buyers with $v \le w$ never buy at price $w+1$. Thus
\[
    \E{}{\ualg\text{ value from }S}
    \;=\; \sum_{b\in\mathcal{N}_S}\sum_{v > w} q_{b,v}\,v
    \;=\; \sum_{b\in\mathcal{N}_S}\sum_{v > w} x_{b,v}\,v
    \;=\; \fopt_S.
\]

\medskip
\noindent
\textbf{Case 2: \emph{S} has a crucial value \emph{w} and $x_{S,w}\,w \le \sum_{v > w} x_{S,v}\,v$.}
The menu is again $|\mathcal{N}_S|$ copies of $S$ at price $w+1$, so the same argument as in Case~1 gives
\[
    \E{}{\ualg\text{ value from }S}
    \;=\; \sum_{v > w} x_{S,v}\,v = Q_S.
\]
By assumption we get that $Q_S \ge \fopt_S/2$, hence
\[
    \E{}{\ualg\text{ value from }S}
     \;\ge\; \frac12\,\fopt_S.
\]

\medskip
\noindent
\textbf{Case 3: \emph{S} has a crucial value \emph{w} and $x_{S,w}\,w > \sum_{v > w} x_{S,v}\,v$.}
Now the mass at $w$ dominates.  Let $Y$ denote the number of buyers in $\mathcal{N}_S$ whose value equals $w$; then $Y$ is a sum of independent binary random variables with total mean $q_{S, w}$. We consider two subcases, depending on $x_{S,w}$.

\medskip
\noindent
\emph{\textbf{Case 3a:} $x_{S,w} \le 1$.}
The menu offers:
$|\mathcal{N}_S|$ copies of $S$ at price $w+1$, and one additional copy at price $w$ with probability $p \;\coloneqq\; \max\{x_{S,w},\, x_{S,w}/q_{S,w}\}$. We distinguish according to $q_{S,w}$:
\medskip

\begin{itemize}
    \item If $q_{S,w} \le 1$, then $x_{S,w}/q_{S,w} \ge x_{S,w}$, so $p = x_{S,w}/q_{S,w}$.  
  By Lemma~\ref{lem:PBD}, $\Prob{Y \ge 1} \ge q_{S,w}/2$.  
  Conditioned on including the $w$-priced copy and on $Y \ge 1$, $\ualg$ sells that copy at value $w$, hence the expected contribution of this copy is at least
  \[
    p \cdot \Prob{Y \ge 1} \cdot w
    \;\ge\;
    \frac{x_{S,w}}{q_{S,w}} \cdot \frac{q_{S,w}}{2} \cdot w
    \;=\;
    \frac12\,x_{S,w}\,w.
  \]
  \item If $q_{S,w} > 1$, then $x_{S,w}/q_{S,w} < x_{S,w}$, so $p = x_{S,w}$.  
  Lemma~\ref{lem:PBD} yields $\Prob{Y \ge 1} \ge 1/2$, and thus the expected contribution of the $w$-priced copy is at least
  \[
      p \cdot \Prob{Y \ge 1} \cdot w
      \;\ge\;
      x_{S,w} \cdot \frac12 \cdot w
      \;=\;
      \frac12\,x_{S,w}\,w.
  \]
\end{itemize}
In either subcase,
\[
    \E{}{\ualg\text{ value from }S}
    \;\ge\; \frac12\,x_{S,w}\,w
    \;>\; \frac14\,\fopt_S.
\]
where in the last inequality we use $x_{S,w}\,w > \sum_{v > w} x_{S,v}\,v$.

\medskip
\noindent
\emph{\textbf{Case 3b:} $x_{S,w} > 1$.}
The menu offers: $\lfloor x_{S,w} \rfloor$ copies of $S$ at price $w$, and $|\mathcal{N}_S|$ copies at price $w+1$. Lemma~\ref{lem:PBD} with $\mu = q_{S,w}$ gives
\[
    \Prob{Y \ge q_{S,w}/2} \;\ge\; \frac12,
\]
and therefore
\[
    \Prob{Y \ge x_{S,w}/2}
    \;\ge\;
    \Prob{Y \ge q_{S,w}/2}
    \;\ge\; \frac12.
\]
Whenever $Y \ge x_{S,w}/2$, the unconstrained algorithm sells at least $x_{S,w}/2$ of the $\lfloor x_{S,w} \rfloor$ copies at price $w$. Hence
\[
    \E{}{\ualg\text{ value from }S}
    \;\ge\;
    \frac12 \cdot \frac{x_{S,w}}{2} \cdot w
    \;=\;
    \frac14\,x_{S,w}\,w
    \;>\;
    \frac18\,\fopt_S.
\]
Combining all cases, for every desired set $S$ we have
\[
    \E{}{\ualg\text{ value from }S}
    \;\ge\; \frac{1}{8}\,\fopt_S.
\]
Summing over $S$ and using $\sum_S \fopt_S = \fopt_\gamma$ yields
\[
    \E{M,I}{\ualg}
    \;=\;
    \sum_S \E{}{\ualg\text{ value from }S}
    \;\ge\;
    \frac{1}{8}\,\fopt_\gamma,
\]
which is exactly the statement of Lemma~\ref{lem:ualg}.
\end{proof}

\subsection{Proof of Lemma \ref{lem:L(e)}}
\begin{proof}
Fix a desired set $S$. We analyze the contribution to the load on a fixed item $e\in S$ coming from buyers in $\mathcal{N}_S$. As before, we assume that buyers in $\mathcal{N}_S$ only purchase menu entries corresponding to the bundle $S$; the general case follows from price subadditivity (Appendix~\ref{app:subadditivity}).  

For $b\in\mathcal{N}_S$ let $L_b(e)$ be the indicator that buyer $b$ contributes to the load on item $e$, and set $L_S(e) = \sum_{b\in\mathcal{N}_S} L_b(e)$. We inspect each of the menu cases of Step~2, in Section~\ref{sec:framework}.

\textbf{Case 1:} $\abs{\mathcal{N}_S}$ copies at price $w+1$.
Here a buyer accepts if and only if its realized value is $v>w$.  Thus
\[
    L_S(e) = \sum_{b\in\mathcal{N}_S} \ind{v_b > w} .
\]
Taking expectations,
\[
    \E{M,I}{L_S(e)}
    = \sum_{b\in\mathcal{N}_S} \sum_{v>w} q_{b,v}
    \le \sum_{b\in\mathcal{N}_S} \sum_{v\in\Supp} x_{b,v}.
\]

\textbf{Case 2:} one copy at price $w$ w.p.\ $x_{S,w}/q_{S,w}$, and $\abs{\mathcal{N}_S}$ copies at $w+1$.
A buyer contributes to $L_S(e)$ either by
\emph{(i)} obtaining the (possibly offered) $w$-copy, or
\emph{(ii)} having value $v>w$.
Since at most one $w$-copy can be claimed,
\[
    L_S(e)
    \;\le\;
    \ind{\text{$w$-copy is offered}} \cdot \ind{\exists b\,:\, v_b = w}
    \;+\;
    \sum_{b\in\mathcal{N}_S} \ind{v_b>w}.
\]
Taking expectations,
\[
    \E{M,I}{L_S(e)}
    \le
    \frac{x_{S,w}}{q_{S,w}}\cdot q_{S,w}
    + \sum_{b\in\mathcal{N}_S} \sum_{v>w} q_{b,v}
    =
    x_{S,w} + \sum_{b\in\mathcal{N}_S} \sum_{v>w} q_{b,v}
    = \sum_{b\in\mathcal{N}_S} \sum_{v\in\Supp} x_{b,v}.
\]

\textbf{Case 3:} one copy at price $w$ w.p.\ $x_{S,w}$, and $\abs{\mathcal{N}_S}$ copies at $w+1$.
Exactly as above,
\[
    L_S(e) \le \ind{\text{$w$-copy offered}} 
              + \sum_{b\in\mathcal{N}_S} \ind{v_b>w},
\]
and hence
\[
    \E{M,I}{L_S(e)}
    \le x_{S,w} + \sum_{b\in\mathcal{N}_S} \sum_{v>w} q_{b,v}
    = \sum_{b\in\mathcal{N}_S} \sum_{v\in\Supp} x_{b,v}.
\]

\textbf{Case 4:} $\lfloor x_{S,w}\rfloor$ copies at price $w$ and $\abs{\mathcal{N}_S}$ copies at $w+1$.
In this case,
\[
    L_S(e)
    \le
    \lfloor x_{S,w}\rfloor
    + \sum_{b\in\mathcal{N}_S} \ind{v_b>w},
\]
and
\[
    \E{M,I}{L_S(e)}
    \le
    \lfloor x_{S,w}\rfloor
    + \sum_{b\in\mathcal{N}_S} \sum_{v>w} q_{b,v}
    \le
    \sum_{b\in\mathcal{N}_S} \sum_{v\in\Supp} x_{b,v}.
\]

Summing over all desired sets $S$ concludes the proof. Moreover, in each case $L_S(e)$ is equal or upper bounded by a sum of negatively associated (NA) $\{0,1\}$ random variables ($Z_i$ in the lemma statement). Indeed, for \underline{Cases $1, 3, 4$} the proof is straightforward (note that in these cases the random variables are actually independent). 

For \underline{Case 2} we will use standard properties of negative dependence (see Section 2 in \cite{DubhashiRanjan1996}), first set $A_b=\ind{v_b=w}$, $B_b=\ind{v_b>w}$, and $O=\ind{\text{$w$-copy offered}}$. Then $A_b+B_b\le 1$, so $\{A_b,B_b\}$ is NA. By independent union, $\{A_b,B_b\}_{b\in\mathcal{N}_S}$ is NA. Applying the non-decreasing map $(A_b)_b\mapsto \ind{\sum_b A_b\ge 1}$ gives
$\{\ind{\exists b:\, v_b=w}, \{B_b\}_b\}$ is NA. Since $O$ is independent, adding $O$ preserves NA. Applying the non-decreasing map $(e,f)\mapsto e\cdot f$ yields
\[
\Big\{
\ind{\text{$w$-copy offered}}\cdot \ind{\exists b:\, v_b=w},
\ \{\ind{v_b>w}\}_{b\in\mathcal{N}_S}
\Big\}\ \text{is NA, done.}
\]

Finally, $L_{M,I}(e)=\sum_S L_S(e)$ inherits the same property since random variables for different sets are independent. This enables the concentration bounds we use in the analysis and finishes the proof of the lemma.
\end{proof}

\section{Graph Setting}

\subsection{Load Condition (Proof of Lemma \ref{lem:graph-load})}\label{sec:graphload}
\begin{proof}

The proof is similar to the single-minded case. First, fix a desired $(s, t)$ pair. We analyze the contribution to the load on a fixed edge $e$ coming from buyers in $\mathcal{N}_{s, t}$. As before, we assume that buyers in $\mathcal{N}_{s, t}$ only purchase menu entries corresponding to the bundle $(s, t)$; the general case follows from price subadditivity (Appendix~\ref{app:graph-subadd}).  

For $b\in\mathcal{N}_{s, t}$ let $L_b(e)$ be the indicator that buyer $b$ contributes to the load on edge $e$, and set $L_{(s, t)}(e) = \sum_{b\in\mathcal{N}_{s, t}} L_b(e)$. Recall that $\mathcal{P}_{s,t}^e$ denotes the set of $(s, t)$ paths that contain edge $e$. Lastly, let $x_{(s, t), \ge v, p} = \sum_{b\in \mathcal{N}_{s, t}}\sum_{v' \ge v}x_{b, v', p}$ and $x_{(s, t), \ge v}=\sum_{b\in \mathcal{N}_{s, t}}\sum_{v' \ge v}\sum_{p\in \mathcal{P}_{s,t}}x_{b, v', p}$. We inspect each of the menu cases of Step~2, in Section~\ref{sec:framework}.

\textbf{Case 1:} $\abs{\mathcal{N}_{s, t}}$ copies at price $w+1$.
Here a buyer accepts if and only if its realized value is $v>w$.  Thus
\[
    L_{(s, t)}(e) = \sum_{b\in\mathcal{N}_{s, t}} \ind{v_b > w}\ind{\text{$b$ routed through a path in $\mathcal{P}_{s, t}^e$}}.
\]
Taking expectations,
\[
    \E{M,I}{L_{(s, t)}(e)}
    \le q_{(s, t),>w}\sum_{p\in \mathcal{P}_{s, t}^e}\frac{x_{(s, t), \ge w, p}}{x_{(s, t), \ge w}} 
    \le  \sum_{p\in \mathcal{P}_{s, t}^e}x_{(s, t), \ge w, p}
    \le \sum_{b\in\mathcal{N}_{s, t}} \sum_{p\in \mathcal{P}_{s, t}^e}\sum_{v\in\Supp} x_{b,v,p}
\]

\textbf{Case 2:} one copy at price $w$ w.p.\ $x_{(s, t),w}/q_{(s, t),w}$, and $\abs{\mathcal{N}_{s, t}}$ copies at $w+1$.
A buyer contributes to $L_{(s, t)}(e)$ either by
(i) obtaining the (possibly offered) $w$-copy, or
(ii) having value $v>w$.
Since at most one $w$-copy can be claimed,
\begin{align*}
    L_{(s, t)}(e)
    &\;\le\;
    \ind{\text{$w$-copy is offered}} \cdot \ind{\exists b\,:\, v_b = w}\cdot \ind{\text{copy routed through a path in $\mathcal{P}_{s, t}^e$}}
     \\ &\;+\;
    \sum_{b\in\mathcal{N}_{s, t}} \ind{v_b>w}\cdot \ind{\text{$b$ routed through a path in $\mathcal{P}_{s, t}^e$}}.
\end{align*}
Taking expectations,
\begin{align*}
    \E{M,I}{L_{(s, t)}(e)}
    &\le
    \frac{x_{(s, t),w}}{q_{(s, t),w}}\cdot q_{(s, t),w} \cdot \sum_{p\in \mathcal{P}_{s, t}^e}\frac{x_{(s, t), \ge w, p}}{x_{(s, t), \ge w}}
    + q_{(s, t),>w}\sum_{p\in \mathcal{P}_{s, t}^e}\frac{x_{(s, t), \ge w, p}}{x_{(s, t), \ge w}} \\
    &=
    \sum_{b\in\mathcal{N}_{s, t}} \sum_{p\in \mathcal{P}_{s, t}^e}\sum_{v\in\Supp} x_{b,v,p}
\end{align*}

\textbf{Case 3:} one copy at price $w$ w.p.\ $x_{(s, t),w}$, and $\abs{\mathcal{N}_{s, t}}$ copies at $w+1$.
Exactly as above,
\begin{align*}
    L_{(s, t)}(e) &\le \ind{\text{$w$-copy offered}}\cdot \ind{\text{copy routed through a path in $\mathcal{P}_{s, t}^e$}} \\ 
    &+ \sum_{b\in\mathcal{N}_{s, t}} \ind{v_b>w}\cdot \ind{\text{buyer routed through a path in $\mathcal{P}_{s, t}^e$}},
\end{align*}
and hence taking expectations
\begin{align*}
    \E{M,I}{L_{(s, t)}(e)}
    &\le
    x_{(s, t),w}\cdot \sum_{p\in \mathcal{P}_{s, t}^e}\frac{x_{(s, t), \ge w, p}}{x_{(s, t), \ge w}}
    + q_{(s, t),>w}\sum_{p\in \mathcal{P}_{s, t}^e}\frac{x_{(s, t), \ge w, p}}{x_{(s, t), \ge w}} \\
    &=
    \sum_{b\in\mathcal{N}_{s, t}} \sum_{p\in \mathcal{P}_{s, t}^e}\sum_{v\in\Supp} x_{b,v,p}
\end{align*}

\textbf{Case 4:} $\lfloor x_{(s, t),w}\rfloor$ copies at price $w$ and $\abs{\mathcal{N}_{s, t}}$ copies at $w+1$.
In this case,
\begin{align*}   
    L_{(s, t)}(e)
    &\le
    \sum_{j=1}^{\floor{x_{(s, t),w}}}\ind{\text{$j$-th copy routed through a path in $\mathcal{P}_{s, t}^e$}} \\ 
    &+ \sum_{b\in\mathcal{N}_{s, t}} \ind{v_b>w}\cdot \ind{\text{buyer routed through a path in $\mathcal{P}_{s, t}^e$}},
\end{align*}
and taking expectations
\[
    \E{M,I}{L_{(s, t)}(e)}
    \le
    \floor{x_{(s, t),w}}\cdot \sum_{p\in \mathcal{P}_{s, t}^e}\frac{x_{(s, t), \ge w, p}}{x_{(s, t), \ge w}}
    +q_{(s, t),>w}\sum_{p\in \mathcal{P}_{s, t}^e}\frac{x_{(s, t), \ge w, p}}{x_{(s, t), \ge w}}
    \leq \sum_{b\in\mathcal{N}_{s, t}} \sum_{p\in \mathcal{P}_{s, t}^e}\sum_{v\in\Supp} x_{b,v,p}
\]

Summing over all desired $(s, t)$ pairs concludes the proof. Moreover, in each case $L_{(s, t)}(e)$ is equal or upper bounded by a sum of independent or negatively associated (NA) $\{0,1\}$ random variables. Indeed, for \underline{Cases $1, 3, 4$} the proof is straightforward. 

For \underline{Case 2}, we proceed exactly as in the single-minded case and we get that
\[
\Big\{
\ind{\text{$w$-copy offered}}\cdot \ind{\exists b:\, v_b=w},
\ \{\ind{v_b>w}\}_{b\in\mathcal{N}_{s, t}}
\Big\}\ \text{is NA.}
\]
and since we can multiply each element with independent binary random variables and still preserve NA, we get the result. 

Finally, $L_{M,I}(e)=\sum_{(s, t)} L_{(s, t)}(e)$ inherits the same property since random variables for different sets are independent. This enables the concentration bounds we use in the analysis and finishes the proof of the lemma.
\end{proof}

\subsection{Subadditivity}
\label{app:graph-subadd}
We now relax the simplifying assumption that a buyer who desires an $(s, t)$-pair is only interested in $(s, t)$-path copies. The proof is similar to the single-minded case. Similarly, a buyer could purchase several paths $p_1,\ldots,p_\ell$, that correspond to $(s_1, t_1),\ldots, (s_\ell, t_\ell)$ types respectively, from the posted menu such that $\bigcup_i p_i \supseteq p \in \mathcal{P}_{s, t}$. We will show that after applying essentially the same simple normalization step to the fractional LP solution as in the single-minded case, no buyer can ever strictly prefer such a cover over purchasing an $(s,t)$-path directly. In other words, the posted prices are subadditive.

\paragraph{Pre-processing.}
For the graph setting, the objective of the pre-processing step becomes
\[
    \min \;\Bigl(\textsc{FracWeight}(x) + \sum_{e\in E} w_e(x)\Bigr)
    \qquad\text{s.t. } x\in \mathcal H,
\]
where
\[
    \textsc{FracWeight}(x) = \sum_b\sum_{v\in\Supp}\sum_{p\in\mathcal{P}_b} x_{b,v,p}
    \quad
    \text{and}
    \quad
    w_e(x) = \sum_{b}\sum_{v\in\Supp}\sum_{p\in\mathcal{P}_b\,:\,e\in p} x_{b,v,p}.
\]
We now state the equivalent of Lemma \ref{lem:important-subadd} for the graph setting.
\begin{lemma}\label{lem:important-subadd-graph}
    Fix a desired $(s, t)$-path such that $x_{(s, t), v} < q_{(s, t), v}$. 
    Suppose there exist different $(s_i, t_i)$-paths $p_1,\dots,p_\ell$ such that $\bigcup_{i=1}^\ell p_i \supseteq p$, for some $p\in \mathcal{P}_{s, t}$.
    If $x_{(s_i, t_i), v_i, p_i}>0$ for all $i\in [\ell]$, then
    \[
        \sum_{i=1}^\ell v_i > v .
    \]
\end{lemma}
The proof proceeds exactly as in the single-minded case, except that we transfer mass from paths rather than from sets.

\paragraph{Subadditivity.}
We slightly adjust our notation for prices, writing them as $r$ rather than $p$, so as to avoid confusion with paths. As in the single-minded setting, fix a buyer whose desired routing is an $S=(s, t)$-path. Suppose that in the posted menu they purchase paths $p_1,\dots,p_\ell$, that correspond to $T_i = (s_i, t_i)$ pairs, such that 
\[
\bigcup_i p_i \supseteq p \quad \text{and} \quad \sum_{i=1}^\ell r(T_i) < r(S) \tag{$\perp\perp$}
\]
for some $p\in \mathcal{P}_{s, t}$. Note that since we always offer $\abs{\mathcal{N}_{s, t}}$ copies of $(s, t)$-paths at some price, a copy of an $(s, t)$-path will always be available as in the single-minded setting.

First, if $r(S)=1$ then $r(T_i)=0$, contradiction. Now notice that $x_{T_i, r(T_i)} > 0$ and $x_{S, r(S)-1}$ is not tight by the definition of our pricing rules. Now note that we cannot directly apply Lemma \ref{lem:important-subadd-graph} as in the single-minded setting since $x_{T_i, r(T_i)} > 0$ does not directly imply $x_{T_i, r(T_i), p_i} > 0$. However, we can overcome this issue via a mass-transfer argument. 

Suppose that $x_{T_i, r(T_i), p_i}=0$. Then, since $p_i$ is realized it holds that $x_{T_i,\geq w_{T_i}, p_i} > 0$ by the definition of the roulette drawing (recall that $w_S$ denotes the important value of $S$). Thus, there exists a value $v$ such that $x_{T_i, v, p_i} > 0$. Also, since $r(T_i)$ is a price there exists a path $p_i'$ such that $x_{T_i, r(T_i), p_i'} > 0$. We now transfer sufficiently small mass from $x_{T_i, v, p_i} > 0$ to $x_{T_i, r(T_i), p_i} = 0$ and from $x_{T_i, r(T_i), p_i'} > 0$ to $x_{T_i, v, p_i'}$. Note that these transfers do not interfere with the feasibility, optimality or the preprocessing step, since they occur between same paths and we balance the value exchanges. Thus, we can assume that $x_{T_i, r(T_i), p_i} > 0$ for each $i\in [\ell]$. 


As a result, we can apply Lemma \ref{lem:important-subadd-graph} and get contradiction as in the single-minded case.

\section{Lower Bounds} \label{app:lower-bounds}
\subsection{Proof of Theorem \ref{thm:lb-d}}
We give a variant of the lower-bound template from Theorem \ref{thm:lb-gen} that additionally enforces a maximum desired bundle size $d$. First we introduce the notion of balanced $r$-way QI partitions.

\paragraph{Balanced \emph{r}-way QI Partitions.}
Let $U$ be an $n$-element ground set.
A partition of $U$ into $t$ classes is called \emph{$d$-balanced}
if every class has size at most $d$.
Let
$g_{\mathrm{bal}}(n,t,r,d)$ denote the largest size of an $r$-way QI family
of $d$-balanced $t$-partitions of an $n$-element set.

We now refine Poljak's probabilistic argument to the balanced setting.
The only difference is that we sample partitions from uniformly random
\emph{balanced labelings}.

\begin{lemma}[Balanced refinement of Theorem 4 in \cite{poljak}]\label{lem:poljak-balanced}
Let $n,t,r\ge 1$ be integers with $t\mid n$, and set $d := n/t$.
Then
\[
    g_{\mathrm{bal}}(n,t,r,d)
    \;\ge\;
    \frac{r}{et}\cdot
    \exp\!\left(\frac{n}{r t^r}\right),
\]
\end{lemma}

\begin{proof}
Let $U$ be an $n$-element set with $n=dt$.
We generate a random $d$-balanced $t$-partition by choosing a uniformly random
labeling $\sigma:U\to[t]$ such that each label appears exactly $d$ times, and
letting the classes be $\sigma^{-1}(1),\dots,\sigma^{-1}(t)$.
Sample $k$ such partitions independently.

Fix distinct indices $j_1,\dots,j_r\in[k]$ and classes $a_1,\dots,a_r\in[t]$.
Let
\[
    Z
    :=
    \Bigl|
    P^{(j_1)}_{a_1}\cap\cdots\cap P^{(j_r)}_{a_r}
    \Bigr|
    =
    \sum_{u\in U}
    \ind{u\in P^{(j_1)}_{a_1}\cap\cdots\cap P^{(j_r)}_{a_r}}.
\]
For each $u\in U$ and each $\ell\in[r]$,
\[
    \Prob{u\in P^{(j_\ell)}_{a_\ell}} = \frac{d}{n} = \frac1t,
\]
and by independence across the $r$ sampled partitions,
\[
    \Prob{u\in P^{(j_1)}_{a_1}\cap\cdots\cap P^{(j_r)}_{a_r}}
    = \frac{1}{t^r}.
\]
Hence
\[
    \E{}{Z} = \frac{n}{t^r}.
\]

Moreover, for a fixed partition $j$ and class $a$,
$P^{(j)}_a$ is a uniformly random subset of $U$ of size $d$. It is a well known fact that for a uniformly random size-$d$ subset, the membership indicators are negatively associated (sampling without replacement) therefore
\[
    \Prob{Z=0}
    \;\le\;
    \prod_{u\in U}\Prob{u\notin P^{(j_1)}_{a_1}\cap\cdots\cap P^{(j_r)}_{a_r}}
    =
    \left(1-\frac{1}{t^r}\right)^n
\]

Let $\mathrm{Bad}$ be the event that the $k$ sampled partitions are not
$r$-way QI. By a union bound over the $\binom{k}{r}$ choices of distinct
indices and the $t^r$ choices of classes,
\[
    \Prob{\mathrm{Bad}}
    \;\le\;
    \binom{k}{r}\, t^r \left(1-\frac{1}{t^r}\right)^n.
\]
Similarly to \cite{poljak}, choosing
\[
    k
    =
    \ceil{
    \frac{r}{et}\cdot
    \exp\!\left(\frac{n}{r t^r}\right)
    }
\]
ensures $\Prob{\mathrm{Bad}}<1$, hence proving the lemma. 
\end{proof}

We now set $r := B+1$, since $(B+1)$-way QI is exactly what rules out serving buyers from more than $B$ groups when there are $B$ copies of each item (as in Claim~\ref{ob:at-most-B-groups}). We will take $m=dt$ items and require that every class in our partitions has size at most $d$; with $t\mid m$, balanced sampling gives classes of size exactly $m/t=d$.

Define, for $d$ sufficiently large and $1\le B<\ln d$,
\begin{equation}\label{eq:t-dB}
    t = t_{d,B}
    \;\coloneqq\;
    \left\lfloor
        \left(
            \frac{d}{2B\ln d}
        \right)^{\!\frac{1}{B+1}}
    \right\rfloor,
    \qquad
    m := dt.
\end{equation}

\begin{lemma}\label{lem:g-vs-tt-dB}
With $t$ and $m$ as in \eqref{eq:t-dB},
\[
    g_{\mathrm{bal}}(m,t,B+1,d)
    \;\ge\;
    B\, t^{\,t},
\]
for all sufficiently large $d$.
\end{lemma}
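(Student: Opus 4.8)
The plan is to mirror the algebraic argument already used for \Cref{lem:N-vs-tt}, but with the exponent $B+2$ replaced by $B+1$ and with $n=m=dt$ rather than $n=m$. Starting from \Cref{lem:poljak-balanced} with $r=B+1$ and $d=m/t$, we have
\[
    g_{\mathrm{bal}}(m,t,B+1,d)
    \;\ge\;
    \frac{e(B+1)}{t}\cdot\exp\!\left(\frac{m}{(B+1)\,t^{B+1}}\right).
\]
So it suffices to show that the right-hand side is at least $Bt^{t}$. Taking logarithms, the claim reduces to verifying
\[
    \frac{m}{(B+1)\,t^{B+1}}
    \;\ge\;
    t\ln t + \ln B - \ln\!\frac{e(B+1)}{t}
    \;=\;
    t\ln t - 1 + \ln\frac{Bt}{B+1},
\]
and since $\ln\frac{Bt}{B+1}\le \ln t$, it is enough to prove $\dfrac{m}{(B+1)t^{B+1}}\ge t\ln t + \ln t$, i.e. $\dfrac{m}{(B+1)t^{B+1}}\ge (t+1)\ln t$.

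Next I would substitute $m=dt$, so the target becomes $\dfrac{d}{(B+1)t^{B}}\ge (t+1)\ln t$. By the definition \eqref{eq:t-dB} of $t=t_{d,B}$ we have $t\le\left(\frac{d}{2B\ln d}\right)^{1/(B+1)}$, equivalently $t^{B+1}\le \frac{d}{2B\ln d}$, hence $t^{B}\le \frac{d}{2B\,t\ln d}$. Plugging this in,
\[
    \frac{d}{(B+1)t^{B}}
    \;\ge\;
    \frac{d\cdot 2B\,t\ln d}{(B+1)\,d}
    \;=\;
    \frac{2B}{B+1}\,t\ln d
    \;\ge\;
    t\ln d,
\]
using $2B\ge B+1$ for $B\ge 1$. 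It remains to check that $t\ln d\ge (t+1)\ln t$, i.e. $t\ln d - t\ln t\ge \ln t$, i.e. $t\ln(d/t)\ge \ln t$. Since $t\le d^{1/(B+1)}\le d^{1/2}$ we have $d/t\ge d^{1/2}$, so $t\ln(d/t)\ge \tfrac{t}{2}\ln d\ge \ln t$ for all sufficiently large $d$ (as $t\le d^{1/2}$ forces $\ln t\le\tfrac12\ln d\le\tfrac{t}{2}\ln d$ once $t\ge1$). This closes the chain of inequalities; I would also note that $t\ge 1$ for all sufficiently large $d$ so that the floor in \eqref{eq:t-dB} does not collapse the argument, and that the hypothesis $1\le B<\ln d$ guarantees $t\ge 2$ eventually, keeping $\ln t>0$.

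The main obstacle is purely bookkeeping: making sure the slack in each inequality is genuinely positive for \emph{all sufficiently large} $d$ rather than only asymptotically, and handling the floor function cleanly (as in the proof of \Cref{lem:N-vs-tt}, replacing $t$ by the lower bound $\tfrac12\left(\tfrac{d}{2B\ln d}\right)^{1/(B+1)}$ wherever a lower bound on $t$ is needed and by the upper bound $\left(\tfrac{d}{2B\ln d}\right)^{1/(B+1)}$ wherever an upper bound is needed). There is a mild subtlety in that $B$ itself may grow with $d$ (up to $\ln d$), so one must confirm that the constants extracted — e.g. $\frac{2B}{B+1}\ge 1$ and $\frac{1}{2}\left(\frac{1}{2B}\right)^{1/(B+1)}\ge\frac14$ — hold uniformly over this range of $B$; both do, since $\left(\frac{1}{2B}\right)^{1/(B+1)}\to 1$ is not needed, only that it is bounded below by a constant, which follows from $2B\le 2\ln d$ and $(2\ln d)^{1/(B+1)}$ being bounded for $B\ge 1$. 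Once these uniform bounds are in hand, the inequality $g_{\mathrm{bal}}(m,t,B+1,d)\ge Bt^t$ follows exactly as above.
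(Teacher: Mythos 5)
Your proposal is correct and follows essentially the same route as the paper: apply Lemma~\ref{lem:poljak-balanced} with $n=m=dt$ and $r=B+1$, then use $t^{B+1}\le d/(2B\ln d)$ to lower bound the exponent by (a constant times) $t\ln d$, which dominates $\ln\bigl(Bt^{\,t}\bigr)$ for all sufficiently large $d$; you merely carry explicit constants where the paper writes $\Omega(t\ln d)$ and $d^{\Omega(t)}$. (One small nit in your closing aside: the fact that $\bigl(\tfrac{1}{2B}\bigr)^{1/(B+1)}$ is bounded below by a constant follows from $(2B)^{1/(B+1)}\le 4^{1/3}$ for all integers $B\ge 1$, not from boundedness of $(2\ln d)^{1/(B+1)}$ — but that remark concerns the theorem's final constant, not this lemma, so it does not affect your proof.)
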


\begin{proof}
Apply Lemma \ref{lem:poljak-balanced} with $n=m=dt$, $r=B+1$, and $d=n/t$:
\[
    g_{\mathrm{bal}}(m,t,B+1,d)
    \;\ge\;
    \frac{(B+1)}{et}\cdot
    \exp\!\left(
        \frac{m}{(B+1)t^{B+1}}
    \right)
    =
    \frac{(B+1)}{et}\cdot
    \exp\!\left(
        \frac{d}{(B+1)t^{B}}
    \right).
\]
By the definition of $t$,
\[
    t^{B+1}
    \;\le\;
    \frac{d}{2B\ln d}
    \quad\Rightarrow\quad
    \frac{d}{(B+1)t^{B}}
    \;\ge\;
    \Omega(t\ln d),
\]
Hence
\[
    g_{\mathrm{bal}}(m,t,B+1,d)
    \;\ge\;
    d^{\Omega(t)}.
\]
Since $t \le d$, we have $d^{\Omega(t)} \ge B t^t$ for all
sufficiently large $d$, proving the claim.
\end{proof}

\paragraph{Lower-Bound Instance with Bundle Cap \emph{d}.}
Let $m=dt$ items and suppose there are exactly $B$ copies of each item.
By Lemma \ref{lem:g-vs-tt-dB}, there exists a $(B+1)$-way QI family
\[
    \mathcal{F}
    = \bigl\{\mathcal{P}^{(1)},\dots,\mathcal{P}^{(N)}\bigr\},
    \qquad
    N := Bt^t,
\]
of $d$-balanced partitions of the $m$ items into $t$ classes each:
\[
    \mathcal{P}^{(i)}=\{P^{(i)}_1,\dots,P^{(i)}_t\},
    \qquad |P^{(i)}_a|\le d.
\]
For each $i\in[N]$ and $a\in[t]$, introduce a single-minded buyer $B_{i,a}$
who wants exactly $P^{(i)}_a$ and whose value is
\[
    v_{i,a} =
    \begin{cases}
        1 & \text{with probability } 1/t, \\
        0 & \text{with probability } 1 - 1/t,
    \end{cases}
\]
independently over all $(i,a)$.
Thus every desired bundle has size at most $d$.

\paragraph{Analysis.}
Define events $E_i$ and the random variable $X$ exactly as in
Theorem \ref{thm:lb-gen}. Then
$X\sim\mathrm{Bin}(N,(1/t)^t)$ and $Np=B$, so
\[
    \Prob{X\ge B} \ge 0.5
\]
by the proof of Lemma \ref{lem:PBD} (see Case 2).

The structural claims from Theorem \ref{thm:lb-gen} remain unchanged:
\begin{itemize}
    \item By $(B+1)$-way QI, no feasible allocation can serve buyers from more
    than $B$ distinct groups.
    \item If $E_{i_1},\dots,E_{i_B}$ occur, all $tB$ buyers in these $B$ groups
    can be served.
\end{itemize}
Therefore,
\[
    \E{}{\mathrm{OPT}}
    \;\ge\;
    0.5\,Bt,
    \qquad
    \E{}{\mathrm{ALG}}
    \;\le\;
    2B,
\]
and hence
\[
    \frac{\E{}{\mathrm{OPT}}}{\E{}{\mathrm{ALG}}}
    \;\ge\;
    0.25\, t.
\]

\paragraph{Competitive Ratio as a function of \emph{d} and \emph{B}}
From \eqref{eq:t-dB}, for all sufficiently large $d$,
\[
    t
    \;\ge\;
    \frac{1}{2}
    \left(
        \frac{d}{2B\ln d}
    \right)^{\!\frac{1}{B+1}},
\]
and thus every online algorithm has competitive ratio at least
\[
    \Omega\!\left(
        \left(
            \frac{d}{\ln d}
        \right)^{\!\frac{1}{B+1}}
    \right).
\]

\subsection{Refined lower bounds for the unit-capacity case}

Here we show how to remove the $\ln m$ loss in this special case by using Poljak's explicit bound (Theorem~3 in \cite{poljak}) and prove the following theorem.

\begin{theorem} \label{thm:lb-1}
For the single-minded setting with $m$ items and unit capacities,
\emph{every} online algorithm has competitive ratio $\Omega\left(m^{1/3}\right)$.
\end{theorem}

Poljak's theorem states that if $t$ is a prime power and $t^2 - t$ divides $n - t$, then
\[
    g(n,t,2)
    \;\ge\;
    t^{\frac{n-t}{t^2 - t}}.
\]
We restrict to the case where the number of items is a cube of a prime, $m = n = p^3$ for a sufficiently large prime $p$, and set $t \coloneqq p,\ r \coloneqq 2$. Since $t=p$ is a prime (hence a prime power) and $t^2 - t = p^2 - p$ divides $n - t = p^3 - p = p(p^2-1) = p(p-1)(p+1)$, the condition of Poljak's \cite{poljak} Theorem~3 holds, and we obtain
\[
    g(n,t,2)
    \;\ge\;
    t^{\frac{n-t}{t^2 - t}}
    \;=\;
    p^{\frac{p^3-p}{p^2-p}}
    =
    p^{\frac{p(p^2-1)}{p(p-1)}}
    =
    p^{p+1}
    =
    t^{t+1}.
\]
In particular, $g(m,t,2) \;\ge\; t^{t+1} \;\ge\; t^t$, thus we select $N=t^t$ (the number of partitions in our family).  

We now reuse \emph{exactly} the same construction and analysis as in the proof of Theorem~\ref{thm:lb-gen}, specialized to $B=1$ ($r=2$), with this choice of $m$ and~$t$. Lemma~\ref{lem:PBD} gives $\Prob{X \ge 1} \;\ge\; 0.5$, since $Np = B$.

The structural part is exactly as in the $B$-copies proof, with $B=1$ and $r=2$:
the analogue of Claim~\ref{ob:at-most-B-groups} (for $B=1$) says that no feasible allocation
can serve buyers from more than one group; the analogue of Claim~\ref{ob:serve-B-groups}
(with $B=1$) says that if $E_i$ holds, the offline optimum can serve all $t$ buyers in group $i$.
Thus, repeating the offline argument from Theorem~\ref{thm:lb-gen} with $B=1$,
\[
    \E{}{\mathrm{OPT}}
    \;\ge\;
    t \cdot \Prob{X \ge 1}
    \;\ge\;
    0.5\, t.
\]

Similarly, the Rubinstein-style online argument from the proof of
Theorem~\ref{thm:lb-gen} with $B=1$ shows that each group has total expected value $1$ and,
once the algorithm takes value from a group, the expected remaining value in that group
is at most 1; combined with the fact that at most one group can be partially served when $B=1$,
this yields
\[
    \E{}{\mathrm{ALG}} \;\le\; 2.
\]

Therefore, for $m = p^3$ and all sufficiently large primes $p$,
\[
    \frac{\E{}{\mathrm{OPT}}}{\E{}{\mathrm{ALG}}}
    \;\ge\;
    \frac{0.5\, t}{2}
    =
    \Omega(t)
    =
    \Omega(p)
    =
    \Omega(m^{1/3})
\]
which proves Theorem~\ref{thm:lb-1}.

Leveraging the recent work of Saxena--Velusamy--Weinberg~\cite{SVW22} and Alon-Alweiss \cite{AA20} we can actually improve slightly the above lower bound. 

\begin{theorem} \label{thm:lb-1-new}
For the single-minded setting with $m$ items and unit capacities,
\emph{every} online algorithm has competitive ratio 
$m^{\,1/3 + \Omega(1/\log\log m)}$.
\end{theorem}

We first note that one can obtain a lower bound of this form by translating
the lower-bound instance of Saxena--Velusamy--Weinberg~\cite{SVW22} for prophet inequalities over $q$ partition matroids into a $q$-single-minded instance
and counting the number of items.

Instead of going through this translation, we give a direct combinatorial proof,
by connecting the Alon--Alweiss measure (AAM) to qualitatively independent partitions
and then plugging the resulting partitions into the single-minded lower-bound
template used in the proof of Theorem~\ref{thm:lb-1}.

\paragraph{AAM $\Rightarrow$ QI partitions.}

Fix a prime number $p$. For an integer $\ell\ge 1$, a set
$\mathcal{C} \subseteq \mathbb{Z}_p^\ell$ is called a \emph{$\mathbb{Z}_p$-covering $(\ell,p)$-family} if for all distinct $u,v \in \mathcal{C}$ and all $s \in \mathbb{Z}_p$ there exists an index
$i \in [\ell]$ such that
\[
    u_i - v_i \equiv s \pmod p.
\]
For an integer $N \geq 1$, the Alon--Alweiss measure $\AAM(p,N)$ is the smallest $\ell$
for which there exists such a covering family $\mathcal{C}$ with $\abs{\mathcal{C}}\ge N$. Let $g(n,t,2)$ denote the largest size of a family of pairwise (i.e. $2$-way) qualitatively independent $t$-partitions of an $n$-element set.

\begin{lemma}\label{lem:AAM-to-QI-final}
For every prime $p$ and every integer $N\geq 1$,
\[
    g\bigl(p\cdot \AAM(p,N),\,p,\,2\bigr) \;\ge\; N.
\]
\end{lemma}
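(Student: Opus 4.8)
The plan is to turn a $\mathbb{Z}_p$-covering family into a $2$-way QI family of $p$-partitions by a direct ``coordinate-shift'' encoding. First I would set $\ell := \AAM(p,N)$ and fix a $\mathbb{Z}_p$-covering $(\ell,p)$-family $\mathcal{C} = \{u^{(1)},\dots,u^{(N)}\} \subseteq \mathbb{Z}_p^{\ell}$ of size exactly $N$ (discarding extra vectors if $|\mathcal{C}| > N$). The ground set will be $U := [\ell]\times\mathbb{Z}_p$, which has exactly $p\ell = p\cdot\AAM(p,N)$ elements, matching the first argument of $g$ in the claim.

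Next, for each $u\in\mathcal{C}$ I would define a $p$-partition $\mathcal{P}^{(u)} = \{P^{(u)}_a\}_{a\in\mathbb{Z}_p}$ of $U$ by placing the pair $(i,y)$ into the class indexed by $y - u_i \bmod p$; equivalently $P^{(u)}_a = \{(i,\,u_i + a \bmod p) : i\in[\ell]\}$. This is genuinely a partition: every element $(i,y)$ lies in exactly one class (the one with index $a\equiv y - u_i$), and each class has exactly $\ell$ elements. One also checks the $N$ partitions are pairwise distinct: if $u\neq v$ then $u_i\neq v_i$ for some $i$, and the element $(i,u_i)$ sits in class $0$ of $\mathcal{P}^{(u)}$ but in the nonzero class $u_i - v_i$ of $\mathcal{P}^{(v)}$.

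The heart of the argument is verifying pairwise qualitative independence, which is where the covering property enters. Fix distinct $u,v\in\mathcal{C}$ and classes $a,b\in\mathbb{Z}_p$; I would observe that $(i,y) \in P^{(u)}_a \cap P^{(v)}_b$ exactly when $y\equiv u_i + a$ and $y \equiv v_i + b \pmod p$, which upon subtracting forces $u_i - v_i \equiv b - a \pmod p$. Since $u\neq v$, the covering property applied with $s := b-a$ yields an index $i\in[\ell]$ with $u_i - v_i \equiv b-a$; taking $y := u_i + a \bmod p$ then gives a point $(i,y)$ lying in both classes, so $P^{(u)}_a \cap P^{(v)}_b \neq\emptyset$. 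Hence $\{\mathcal{P}^{(u)}\}_{u\in\mathcal{C}}$ is a $2$-way QI family of $N$ distinct $p$-partitions of a $(p\cdot\AAM(p,N))$-element set, which is exactly the assertion $g(p\cdot\AAM(p,N),p,2)\ge N$.

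I expect no serious obstacle: the only thing to get right is the encoding, namely realizing that shifting coordinate $i$ by $u_i$ converts the condition ``every difference $s$ appears in some coordinate'' into ``every pair of classes meets'', and that one should take the ground set to be $[\ell]\times\mathbb{Z}_p$ rather than $\mathbb{Z}_p^{\ell}$. With that choice, both the well-definedness of the partitions (each element gets a unique class via $a\equiv y-u_i$) and the QI check reduce to one-line modular computations.
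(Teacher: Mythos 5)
Your proposal is correct and follows essentially the same construction as the paper: ground set $[\ell]\times\mathbb{Z}_p$, shifted classes $\{(i,\,a+u_i):i\in[\ell]\}$, and the covering property applied with $s=b-a$ to witness every pairwise class intersection. The extra distinctness check you include is harmless (and in fact follows from qualitative independence among distinct indices), so there is nothing to fix.
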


\begin{proof}
Let $\ell:=\AAM(p,N)$. By definition there exists a $\Z_p$-covering $(\ell,p)$-family
$\mathcal{C}\subseteq \Z_p^\ell$ with $|\mathcal{C}|\ge N$, i.e.\ for all distinct $u,v\in\mathcal{C}$ and all
$s\in\Z_p$ there is $i\in[\ell]$ such that $u_i-v_i\equiv s\pmod p$.

Set $X:=[\ell]\times \Z_p$, so $|X|=p\ell$. For each $u=(u_1,\dots,u_\ell)\in\mathcal{C}$ define
a $p$-partition $\mathcal{P}_u$ of $X$ by blocks
\[
B_u(a)\;:=\;\{(i,a+u_i)\ :\ i\in[\ell]\}\qquad (a\in\Z_p),
\]
where addition is mod $p$. Then $\{B_u(a):a\in\Z_p\}$ is a partition of $X$ since every
$(i,j)\in X$ lies in the unique block with $a\equiv j-u_i\pmod p$.

We claim that $\mathcal{P}_u$ and $\mathcal{P}_v$ are qualitatively independent for all distinct $u,v\in\mathcal{C}$.
Fix $a,b\in\Z_p$ and put $s:=b-a\in\Z_p$. By the covering property, there exists
$i\in[\ell]$ such that $u_i-v_i\equiv s\pmod p$. Then
\[
(i,a+u_i) \in B_u(a)
\quad\text{and}\quad
a+u_i \equiv a+v_i+s \equiv b+v_i \pmod p,
\]
hence $(i,a+u_i)=(i,b+v_i)\in B_v(b)$ and so $B_u(a)\cap B_v(b)\neq\emptyset$.
Since $a,b$ were arbitrary, $\mathcal{P}_u$ and $\mathcal{P}_v$ are qualitatively independent.

Thus $\{\mathcal{P}_u:u\in\mathcal{C}\}$ is a family of $|\mathcal{C}|\ge N$ pairwise qualitatively independent
$p$-partitions of an $n=p\ell=p\cdot \AAM(p,N)$ element set. Therefore
$g\bigl(p\cdot \AAM(p,N),\,p,\,2\bigr)\ge N$.
\end{proof}

Saxena--Velusamy--Weinberg~\cite{SVW22} show that there exists an absolute
constant $\alpha>0$ such that for all sufficiently large primes $p$,
\[
    \AAM(p,p^p) \;\le\; p^{\,2 - \alpha/\log\log p}.
\]
Applying Lemma~\ref{lem:AAM-to-QI-final} with $N=p^p$ and $t:=p$, we obtain
\[
    g\bigl(t\cdot \AAM(t,t^t),\,t,\,2\bigr) \;\ge\; t^t
\]
and, writing $m := t \cdot \AAM(t,t^t)$, we have
\[
    m \;\le\; t^{\,3 - \alpha/\log\log t}.
\]
This implies that for all sufficiently large such $m$ there exists an absolute constant $c>0$
with
\begin{equation}\label{eq:t-lower-final}
    t \;\ge\; m^{\,1/3 + c/\log\log m}.
\end{equation}

\paragraph{Lower Bound Instance.}

For such an $m$ and $t$ we now take a ground set of $m$ items and fix a family
\[
    \mathcal{F}
    = \bigl\{\mathcal{P}^{(1)},\dots,\mathcal{P}^{(N)}\bigr\},
    \qquad N = t^t,
\]
of $N$ pairwise qualitatively independent $t$-partitions of the items, guaranteed
by Lemma~\ref{lem:AAM-to-QI-final}.

We then instantiate \emph{exactly} the single-minded lower-bound construction
from the proof of Theorem~\ref{thm:lb-1} for $B=1$ using this family $\mathcal{F}$.

The structural and probabilistic analysis of this instance (number of “fully active”
groups, offline prophet value, and the Rubinstein-style bound on any online algorithm)
is identical to the $B=1$ case of Theorem~\ref{thm:lb-1} and implies:

\[
    \mathbb{E}[\mathrm{OPT}] \geq 0.5t
    \qquad\text{and}\qquad
    \mathbb{E}[\mathrm{ALG}] \leq 2,
\]
for every online algorithm. Thus
\[
    \frac{\mathbb{E}[\mathrm{OPT}]}{\mathbb{E}[\mathrm{ALG}]}
    = 0.25t
    \;=\;
    \Omega\!\left(m^{1/3 + c/\log\log m}\right)
\]
by~\eqref{eq:t-lower-final}, for all sufficiently large $m$ of the form
$m = t\cdot\AAM(t,t^t)$ with $t$ prime.


\section{Deferred Proofs} \label{app:proofs}
\subsection{Fractional Relaxation and Offline Optimum}

We prove the following lemma in order to relate the fractional relaxation and the offline optimum. 
\begin{lemma} \label{lem:fopt-opt}
   $\fopt \geq \opt$.
\end{lemma}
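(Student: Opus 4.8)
The goal is to show that the LP optimum upper-bounds the expected offline welfare $\opt = \E{I\sim\mathcal D}{\text{welfare of optimal allocation on }I}$. The standard approach is to exhibit a single feasible LP solution whose objective value equals $\opt$. First I would fix, for each realization $I$, an optimal offline allocation $\mathrm{ALG}^\star(I)$, and define $x_{b,v}$ to be the probability (over $I\sim\mathcal D$) that buyer $b$ has value $v$ \emph{and} is served its bundle $S_b$ by $\mathrm{ALG}^\star(I)$. Formally,
\[
x_{b,v} \;=\; \Prob[I\sim\mathcal D]{v_b = v \text{ and } b \text{ is served by } \mathrm{ALG}^\star(I)}.
\]

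Next I would verify the two families of constraints. For the box constraints, clearly $0 \le x_{b,v} \le \Prob{v_b = v} = q_{b,v}$, since the event in the definition of $x_{b,v}$ is a subevent of $\{v_b = v\}$. For the capacity constraints, fix an item $e\in\mathcal M$. In every realization $I$, the optimal allocation is feasible, so the number of served buyers whose bundle contains $e$ is at most $c(e)$; that is, $\sum_{b : e\in S_b}\ind{b\text{ served by }\mathrm{ALG}^\star(I)} \le c(e)$ pointwise. Taking expectations over $I$ and exchanging the (finite) sum with the expectation yields $\sum_{b : e\in S_b}\sum_{v\in\mathcal S} x_{b,v} \le c(e)$, using that $\sum_{v} x_{b,v} = \Prob{b\text{ served by }\mathrm{ALG}^\star(I)}$ by the law of total probability over the value of $b$.

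Finally I would compute the objective: by linearity of expectation, the expected welfare of $\mathrm{ALG}^\star$ is $\E{I}{\sum_b v_b\cdot\ind{b\text{ served}}} = \sum_b\sum_{v\in\mathcal S} v\cdot\Prob{v_b=v,\ b\text{ served}} = \sum_b\sum_v x_{b,v}\cdot v$, which is exactly the LP objective at $x$. Hence this feasible point has objective value $\opt$, and since $\fopt$ is the maximum over all feasible points, $\fopt \ge \opt$. The only mild subtlety — not really an obstacle — is handling ties in the offline optimum: one just fixes an arbitrary deterministic tie-breaking rule so that $\mathrm{ALG}^\star(I)$ is a well-defined (measurable) function of $I$, which is unproblematic since the support of $\mathcal D$ is finite. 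I expect no real difficulty here; this is the textbook "ex-ante relaxation dominates the prophet" argument, and the entire content is the correct choice of $x_{b,v}$ as a joint probability.
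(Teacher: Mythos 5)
Your proposal is correct and matches the paper's proof essentially verbatim: the paper defines indicators $y_{b,v}$ for "buyer $b$ is served with value $v$" under the per-realization offline optimum and sets $x_{b,v}=\E{I}{y_{b,v}}$, which is exactly your joint probability, then verifies feasibility by taking expectations of the pointwise capacity constraints and identifies the objective with $\opt$. No differences worth noting.
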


\begin{proof}
Consider the optimal allocation for a fixed realization of buyers' values.  
For each buyer $b$ and value $v \in \Supp$, let $y_{b,v} \in \{0,1\}$ indicate whether in this realization buyer $b$ is allocated bundle $S_b$ when their value is $v$.  
By feasibility of the offline solution, for every $e \in \mathcal{M}$ we have
$$\sum_{b : e \in S_b} \sum_{v \in \Supp} y_{b,v} \le c(e),$$ and moreover $y_{b,v} = 0$ unless buyer $b$ realizes value $v$, implying $\E{I}{y_{b,v}} \le q_{b,v}$, where the expectation is over the randomness in the buyers' realized values.
Define the fractional allocation $x_{b,v} := \E{I}{y_{b,v}}$. By construction, $0 \le x_{b,v} \le q_{b,v}$, so the box constraints of the LP are satisfied. Taking expectations of the offline capacity constraints yields, for every $e \in \mathcal{M}$,
$$
    \sum_{b : e \in S_b} \sum_{v \in \Supp} x_{b,v}
    = \E{I}{ \sum_{b : e \in S_b} \sum_{v \in \Supp} y_{b,v}}
    \le c(e),
$$
so the LP capacity constraints are also satisfied. The objective value of this feasible fractional solution is
$$
    \sum_{b \in [n]} \sum_{v \in \Supp} x_{b,v} \cdot v
    = \E{I}{ \sum_{b \in [n]} \sum_{v \in \Supp} y_{b,v} \cdot v }
    = \opt,
$$
where the last equality holds because the offline allocation attains the optimal welfare for each realization. Since $\fopt$ is the optimal value of the LP and we constructed a feasible fractional solution with value $\opt$, we obtain $\fopt \ge \opt$.
\end{proof}

\subsection{Proof of Lemma \ref{lem:structure-i}}
\begin{proof}
Fix a buyer $b$ and consider the vector $X^{(b)} = (x_{b,v})_{v \in \Supp}$, where values in $\Supp$ are ordered increasingly. Recall the value‐exchange argument from the main text: whenever $v' > v$, we may shift an infinitesimal amount of mass from $x_{b,v}$ to $x_{b,v'}$ without violating any LP constraints, unless one of the box constraints $0 \leq x_{b,v}$ or $x_{b,v'} \le q_{b,v'}$ becomes tight. Such a shift strictly increases the objective because $v' > v$.

\paragraph{Case 1: A crucial value $w$ exists.}

Suppose there exists a value $v < w$ with $x_{b,v} > 0$. Since $v < w$ and $x_{b,w} < q_{b,w}$, we may shift mass from $x_{b,v}$ to $x_{b,w}$ without violating feasibility, strictly improving the objective - a contradiction. Thus $x_{b,v}=0$ for all $v<w$. Next, suppose there exists a value $u > w$ with $x_{b,u} < q_{b,u}$. Since $x_{b,w} > 0$, we may shift mass from $x_{b,w}$ to $x_{b,u}$, again increasing the objective. Hence all $u>w$ must satisfy $x_{b,u} = q_{b,u}$.

\paragraph{Case 2: No crucial value exists.}
Here every $v$ satisfies either $x_{b,v}=0$ or $x_{b,v}=q_{b,v}$. Suppose there exist $u > v$ with $x_{b,u} = 0$ and $x_{b,v}=q_{b,v}$. Since $u>v$ and $x_{b,u}$ is not tight, shifting mass from $v$ to $u$ preserves feasibility and increases the objective - a contradiction. Thus all zero entries occur at smaller values, and all tight entries occur at larger values, with no interleaving.
\end{proof}

\subsection{Proof of Lemma \ref{lem:structure-S}}

\begin{proof}
Fix a bundle $S$ and let $X = X^{(S)}$ for simplicity. Recall that
\[
    x_{S,v} \;=\; \sum_{b \in \mathcal{N}_S} x_{b,v}
    \qquad\text{and}\qquad
    q_{S,v} \;=\; \sum_{b \in \mathcal{N}_S} q_{b,v},
\]
and we refer to $x_{S,v}$ as tight if $x_{S,v} = q_{S,v}$ and crucial if $0 < x_{S,v} < q_{S,v}$. We begin with two structural claims.

\begin{lemma}\label{lem:unique-crucial}
If any buyer $b \in \mathcal{N}_S$ has a crucial value, then all crucial values across buyers in $\mathcal{N}_S$ occur at the same value.  We call this value the \emph{crucial value of $X$}.
\end{lemma}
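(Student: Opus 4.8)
The plan is to argue by a single value-exchange step, exactly in the spirit of the per-buyer argument used for Lemma~\ref{lem:structure-i}. By Lemma~\ref{lem:structure-i}, each buyer $b \in \mathcal{A}$ has \emph{at most one} crucial value; call it $w_b$ when it exists. Suppose for contradiction that two buyers $b, b' \in \mathcal{A}$ both have crucial values and $w_b \neq w_{b'}$; without loss of generality assume $w_b < w_{b'}$. Since crucial means $0 < x_{b,w_b} < q_{b,w_b}$ and likewise $0 < x_{b',w_{b'}} < q_{b',w_{b'}}$, both variables have strict slack in their respective box constraints.

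Next I would perform the exchange: for a sufficiently small $\delta > 0$, decrease $x_{b,w_b}$ by $\delta$ and increase $x_{b',w_{b'}}$ by $\delta$. I claim this preserves feasibility. The box constraints are fine for small enough $\delta$, since $x_{b,w_b} - \delta \ge 0$ and $x_{b',w_{b'}} + \delta \le q_{b',w_{b'}}$ by the strict inequalities above. The capacity constraints are completely unaffected: $b$ and $b'$ demand the same bundle $S_b = S_{b'} = S$, so for every $e \in \mathcal{M}$ the coefficient of $x_{b,w_b}$ and of $x_{b',w_{b'}}$ in the constraint for $e$ is the same indicator $\ind{e \in S}$, and the two changes cancel. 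Finally, the objective changes by $\delta(w_{b'} - w_b) > 0$, so it strictly increases, contradicting the optimality of $x$. Hence $w_b = w_{b'}$, and all crucial values among buyers in $\mathcal{A}$ coincide, which is the value we name the crucial value of $X$.

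There is essentially no obstacle here beyond bookkeeping: the only point requiring a (trivial) check is that the target variable $x_{b',w_{b'}}$ has room to absorb the transferred mass, which is guaranteed precisely because it is crucial rather than tight. The argument is just the per-buyer value exchange of Section~\ref{sec:structure} applied across two buyers of the same bundle, using that same-bundle buyers are interchangeable from the point of view of every capacity constraint.
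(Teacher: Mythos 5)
Your proof is correct and follows essentially the same route as the paper: a value-exchange between two same-bundle buyers, using $x_{b,w_b}>0$ and $x_{b',w_{b'}}<q_{b',w_{b'}}$ for box feasibility, unchanged capacity constraints since $S_b=S_{b'}$, and a strict objective gain contradicting optimality. The explicit check of the box constraints is a minor extra bit of care beyond the paper's version, but nothing differs in substance.
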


\begin{proof}
Suppose buyer $b$ has crucial value $w_b$ and buyer $b'$ has crucial value $w_{b'}$ with $w_{b} < w_{b'}$.  Because both buyers demand the same bundle $S$, shifting a small amount of mass from $x_{b,w_b}$ to $x_{b',w_{b'}}$ preserves all LP capacity constraints: both variables consume exactly the same set of items. This shift increases the objective because $w_{b'} > w_b$, contradicting optimality of the LP solution. Hence no two buyers in $\mathcal{N}_S$ can have crucial values at different levels, proving the claim.
\end{proof}

\begin{lemma}\label{lem:X-prefix-suffix}
If $X$ has a crucial value $w$, then
\[
    x_{S,v} = 0 \;\text{ for all } v < w,
    \qquad\text{and}\qquad
    x_{S,v} = q_{S,v} \;\text{ for all } v > w.
\]
\end{lemma}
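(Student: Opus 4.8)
The plan is to extend the value-exchange argument used for individual buyers in Lemma~\ref{lem:structure-i} to the aggregated vector $X^{(S)}$, the only new ingredient being that the exchange may now transfer mass \emph{between two distinct buyers in $\mathcal{A}_S$}. The crucial observation is that if $b, b' \in \mathcal{A}_S$ then $S_b = S_{b'} = S$, so a variable $x_{b,v}$ and a variable $x_{b',v'}$ occur in exactly the same capacity constraints, namely those indexed by items $e \in S$: for every such $e$ the sum $\sum_{b'' : e \in S_{b''}} \sum_u x_{b'',u}$ contains both terms, and for $e \notin S$ it contains neither. Hence moving an infinitesimal mass $\delta > 0$ from $x_{b,v}$ to $x_{b',v'}$ changes the left-hand side of every capacity constraint by $0$, so feasibility of the move is controlled solely by the box constraints $0 \le x_{b,v}$ and $x_{b',v'} \le q_{b',v'}$ — exactly as in the single-buyer case.

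With this in hand I would argue the two claims separately. For the prefix claim, suppose $x_{S,v} > 0$ for some $v < w$; then some $b \in \mathcal{A}_S$ has $x_{b,v} > 0$, and since $x_{S,w} < q_{S,w}$ by hypothesis, some $b' \in \mathcal{A}_S$ has $x_{b',w} < q_{b',w}$. Transferring a small $\delta$ from $x_{b,v}$ to $x_{b',w}$ keeps the solution feasible and raises the objective by $\delta(w-v) > 0$, contradicting optimality. For the suffix claim, suppose $x_{S,u} < q_{S,u}$ for some $u > w$; pick $b'$ with $x_{b',u} < q_{b',u}$, and, using $x_{S,w} > 0$, pick $b$ with $x_{b,w} > 0$; transferring mass from $x_{b,w}$ to $x_{b',u}$ is again an improving feasible exchange, contradiction. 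In both cases $b$ and $b'$ are allowed to coincide, in which case the step is literally the intra-buyer exchange of Lemma~\ref{lem:structure-i}; the new content is only that it remains valid when $b \ne b'$.

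I do not anticipate a genuine obstacle: each claim is a short proof by contradiction. The only point requiring care is verifying that a \emph{cross}-buyer transfer really leaves all capacity constraints intact (including those for items outside $S$, where both variables simply do not appear), together with the standard fact — already spelled out in the discussion preceding Lemma~\ref{lem:structure-i} — that a box constraint becoming tight is the only thing that can obstruct such an exchange. Combined with Lemma~\ref{lem:unique-crucial}, the two claims show that below $w$ all entries of $X^{(S)}$ vanish, above $w$ all are tight, and $w$ itself is the unique possibly-crucial value, which is precisely the three-block decomposition asserted in Lemma~\ref{lem:structure-S}.
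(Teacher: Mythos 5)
Your proposal is correct and follows essentially the same route as the paper: a cross-buyer value exchange between two buyers demanding the same bundle $S$, justified by the observation that such variables appear in exactly the same capacity constraints, applied once for the prefix claim ($v<w$) and once for the suffix claim ($u>w$) to contradict LP optimality.
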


\begin{proof}
Since $w$ is crucial for $X$, we have $0 < x_{S,w} < q_{S,w}$. First suppose there exists $v < w$ with $x_{S,v} > 0$. Then there is some buyer $b \in \mathcal{N}_S$ with $x_{b,v} > 0$. Because $x_{S,w} < q_{S,w}$, there exists some buyer $b' \in \mathcal{N}_S$ with $x_{b',w} < q_{b',w}$. 

By the same value–exchange argument used in Lemma~\ref{lem:structure-i}, we can shift an infinitesimal amount of mass from $x_{b,v}$ to $x_{b',w}$: this preserves all LP constraints, since both variables correspond to the same bundle $S$ and all item capacities are affected in exactly the same way. The objective strictly increases because $w > v$, contradicting optimality. Hence $x_{S,v} = 0$ for all $v < w$.

Now suppose there exists $u > w$ with $x_{S,u} < q_{S,u}$. Then there is a buyer $b' \in \mathcal{N}_S$ with $x_{b',u} < q_{b',u}$. Since $x_{S,w} > 0$, there exists a buyer $b \in \mathcal{N}_S$ with $x_{b,w} > 0$. Again, applying the value–exchange argument, we can shift an infinitesimal amount of mass from $x_{b,w}$ to $x_{b',u}$. Feasibility is preserved (both entries use the same bundle $S$), and the objective strictly increases because $u > w$, contradicting optimality. Thus no such $u$ can exist, and we must have $x_{S,u} = q_{S,u}$ for all $u > w$.
This proves the claim.
\end{proof}

We now conclude the proof of Lemma \ref{lem:structure-S}. If no crucial value exists in $X$, then for every $v$ we have either $x_{S,v}=0$ or $x_{S,v}=q_{S,v}$, and by the same exchange argument as above, zeros must occur only at lower values and tight values only at higher values. If a crucial value exists, then by Lemma \ref{lem:unique-crucial} it is unique, and by Lemma \ref{lem:X-prefix-suffix} $X$ consists of zeros below it, the crucial value, and tight entries above it. Thus in all cases the vector $X$ has exactly the form stated in the lemma.
\end{proof}

\subsection{Proof of Lemma \ref{lem:blocked-small}}

\begin{proof}
A buyer is rejected only if some item in her bundle is oversold.  Thus
\[
    |\textsc{Blocked}(M,I)|
    \;\le\;
    \sum_{e \in \mathcal{M}}
    \sum_{k = c(e)+1}^{\infty} \ind{L_{M,I}(e)=k}\, (k - c(e)).
\]
Taking expectations and using linearity,
\[
    \E{M,I}{|\textsc{Blocked}(M,I)|}
    \;\le\;
    \sum_{e \in \mathcal{M}}
    \sum_{k=c(e)+1}^{\infty} (k-c(e))\, \Prob{L_{M,I}(e) = k}.
\]

By Lemma \ref{lem:L(e)}, the random load $Z_{M,I}(e)$ is the sum of negatively
associated Bernoulli variables with total mean $\mu_e \le c(e)/\gamma$.  A standard
Chernoff bound for negatively associated variables gives
\[
    \Prob{L_{M,I}(e)\ge k}
    \;\le\;
    \Prob{Z_{M,I}(e)\ge k}
    \;\le\;
    \left(\frac{e\,\mu_e}{k}\right)^k
    \;\le\;
    \left(\frac{e\,c(e)}{\gamma\,k}\right)^k.
\]
Let $k=c(e)+j$ with $j\ge1$.  Using $(1+\tfrac{j}{c})^{c+j}\ge e^j$ and our choice
of $\gamma$, one obtains
\[
    \left(\frac{e\,c(e)}{\gamma(c(e)+j)}\right)^{c(e)+j}
    \;\le\;
    \left(\frac{e}{\gamma}\right)^{c(e)+j} e^{-j}
    \;\le\;
    \frac{1}{20m}\, e^{-j}.
\]
Summing over $j$ gives
\[
    \sum_{k=c(e)+1}^{\infty} (k-c(e))\,\Prob{L_{M,I}(e)=k}
    \;\le\;
    \frac{1}{20m}.
\]
Summing over all $m$ items yields the bound.
\end{proof}

\subsection{Proof of Lemma \ref{lem:N-vs-tt}}

\begin{proof}
    Since $B < \ln m$ and $m\ge 16$, we have
    \[
        \left( \frac{m}{2B\ln m}\right)^{1/(B+2)} > 1,
    \]
    and therefore
    \[
        1 \;\le\; t
        \;\le\;
        \left( \frac{m}{2B\ln m}\right)^{1/(B+2)}
        \;\le\;
        m^{1/(B+2)}
        \;\le\;
        m.
    \]
    Thus $t$ is a well-defined integer in $[m]$.
    Applying Theorem~\ref{thm:poljak-again} with $n=m$ and $r=B+1$ gives
    \[
        g(m,t,B+1)
        \;\ge\;
        \frac{(B+1)}{et} \exp\!\left(\frac{m}{(B+1) t^{B+1}}\right).
    \]
    By the definition of $t$,
    \[
        t^{B+2}
        \;\le\;
        \frac{m}{2B\ln m}.
    \]
    Hence
    \[
        t^{B+1} \;\le\; \frac{m}{2B\ln m} \cdot \frac{1}{t},
    \]
    and therefore
    \[
        \frac{m}{(B+1)t^{B+1}}
        \;\ge\;
        \frac{m}{(B+1)\,\frac{m}{2B\ln m}\,\frac{1}{t}}
        \;=\;
        \frac{2Bt\ln m}{B+1}.
    \]
    Plugging this into the bound on $g(m,t,B+1)$ yields
    \[
        g(m,t,B+1)
        \;\ge\;
        \frac{(B+1)}{et} \exp\!\left(\frac{2Bt\ln m}{B+1}\right)
        \;=\;
        \frac{(B+1)}{et} \, m^{2Bt/(B+1)}.
    \]
It remains to show that
\[
    \frac{(B+1)}{et}\, m^{\frac{2Bt}{B+1}} \;\ge\; B t^t .
\]
Since we already established that $t \le m^{1/(B+2)}$, we have $m \ge t^{B+2}$ and therefore
\begin{align*}
\frac{(B+1)}{et}\, m^{\frac{2Bt}{B+1}}
&\ge \frac{(B+1)}{et}\, \bigl(t^{B+2}\bigr)^{\frac{2Bt}{B+1}}
= \frac{(B+1)}{et}\, t^{\frac{2B(B+2)}{B+1}\,t} \\
&= B t^t \cdot \frac{(B+1)}{eB}\, t^{\left(\frac{2B(B+2)}{B+1}-1\right)t-1}.
\end{align*}
Using $\left(\frac{2B(B+2)}{B+1}-1\right)t-1 \ge 1$ (as $t\ge 1$), we get $g(m,t,B+1)\ge B t^t,$ as claimed.
\end{proof}

\subsection{Auxiliary Lemma}

\begin{lemma}\label{lem:PBD}
    For a random variable $X$ with mean $\mu$ that is the sum of some independent random variables on $\{0,1\}$ we get:
    $$\Prob{X \geq \mu/2}\geq 0.5\cdot \min\{1, \mu\}$$
\end{lemma}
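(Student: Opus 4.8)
The plan is to split into two regimes according to whether $\mu \le 1$ or $\mu > 1$, since the target bound $0.5\cdot\min\{1,\mu\}$ behaves differently in each. In both cases I will compare the event $\{X \ge \mu/2\}$ to the event $\{X \ge 1\}$ (equivalently $\{X \ge 1\}$, since $X$ is integer-valued), because $\Pr[X\ge 1] = 1 - \Pr[X=0]$ is easy to control via the product structure of the distribution.

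First suppose $\mu \le 1$. Then $\mu/2 \le 1/2 < 1$, so since $X$ takes integer values, $\{X \ge \mu/2\} = \{X \ge 1\}$. Writing $X = \sum_{i=1}^k Y_i$ with $Y_i$ independent Bernoulli$(p_i)$ and $\sum_i p_i = \mu$, we have $\Pr[X = 0] = \prod_{i=1}^k (1-p_i)$. The key step is the elementary inequality $\prod_i (1-p_i) \le e^{-\mu}$, followed by $1 - e^{-\mu} \ge \mu/2$ for all $\mu \in [0,1]$ (which holds because $e^{-\mu} \le 1-\mu+\mu^2/2 \le 1-\mu/2$ on that interval). Hence $\Pr[X\ge \mu/2] = 1 - \Pr[X=0] \ge 1 - e^{-\mu} \ge \mu/2 = 0.5\,\mu = 0.5\min\{1,\mu\}$, as desired.

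Next suppose $\mu > 1$. Now I want $\Pr[X \ge \mu/2] \ge 1/2$. The natural tool is a one-sided concentration / anticoncentration bound: a sum of independent $\{0,1\}$ variables with mean $\mu$ satisfies $\Pr[X < \mu/2] \le \Pr[X \le \lfloor \mu/2\rfloor]$, and by a standard lower-tail Chernoff bound $\Pr[X \le (1-\delta)\mu] \le e^{-\delta^2\mu/2}$ with $\delta = 1/2$ gives $\Pr[X < \mu/2] \le e^{-\mu/8}$. Since $\mu > 1$, this is $< e^{-1/8} \approx 0.8825$, which is unfortunately not quite below $1/2$. So the cleaner route is to invoke the median: for a sum of independent Bernoulli variables, the median lies in $\{\lfloor\mu\rfloor,\lceil\mu\rceil\}$, hence $\Pr[X \ge \lceil\mu\rceil - 1] \ge \Pr[X \ge \lfloor\mu\rfloor] \ge 1/2$; since $\mu > 1$ implies $\mu/2 < \mu - 1/2 \le \lceil\mu\rceil$... more directly, $\mu/2 \le \mu - 1/2$ when $\mu\ge 1$, and one checks $\lfloor \mu/2 \rfloor \le \lfloor\mu\rfloor$ with enough slack that $\{X \ge \lfloor\mu\rfloor\} \subseteq \{X \ge \mu/2\}$. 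Therefore $\Pr[X\ge\mu/2] \ge \Pr[X\ge\lfloor\mu\rfloor] \ge 1/2 = 0.5\min\{1,\mu\}$.

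The main obstacle is the $\mu > 1$ case: a naive Chernoff bound does not give the constant $1/2$ cleanly, so I expect to rely on the fact that the median of a sum of independent Bernoulli random variables is within distance $1$ of the mean (a classical result, e.g.\ via Jogdeo--Samuels), and then verify the routine inequality $\lfloor\mu\rfloor \ge \mu/2$ for $\mu\ge 1$ to conclude $\{X\ge\lfloor\mu\rfloor\}\subseteq\{X\ge\mu/2\}$. The $\mu\le 1$ case is entirely elementary once one observes that $X$ integer-valued collapses $\{X\ge\mu/2\}$ to $\{X\ge 1\}$.
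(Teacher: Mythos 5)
Your proposal is correct and follows essentially the same route as the paper: for $\mu\le 1$ you reduce $\{X\ge\mu/2\}$ to $\{X\ge 1\}$ and use $\prod_i(1-p_i)\le e^{-\mu}$ together with $1-e^{-\mu}\ge\mu/2$, and for $\mu>1$ you invoke the Jogdeo--Samuels median bound plus the elementary inequality $\lfloor\mu\rfloor\ge\mu/2$, exactly as the paper does (your detour through Chernoff is correctly discarded). The only cosmetic issue is the muddled sentence around $\mu/2\le\mu-1/2$ in Case~2; the clean statement you need, and eventually state, is simply $\lfloor\mu\rfloor\ge\mu/2$ for $\mu\ge1$.
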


\begin{proof}
Let $X=\sum_{i=1}^n X_i$, where the $X_i$ are independent $\{0,1\}$-valued
random variables with $\Prob{X_i=1}=p_i$. Then $\mu=\E{}{X}=\sum_{i=1}^n p_i$.
\paragraph{Case 1: $0<\mu\le 1$.}
Here $\mu/2\in(0,1/2]$, hence (since $X$ is integer-valued) $ \{X\ge \mu/2\}=\{X\ge 1\}$. Thus
\[
\Prob{X\ge \mu/2}
=\Prob{X\ge 1}
=1-\Prob{X=0}
=1-\prod_{i=1}^n(1-p_i) \geq 1-\Exp{-\sum_{i=1}^n p_i}=1-e^{-\mu}.
\]
using $1-x\le e^{-x}$ for $x\in[0,1]$. For $\mu\in[0,1]$, it holds that $1-e^{-\mu}\geq \mu/2$ therefore we get the lemma. 

\paragraph{Case 2: $\mu\ge 1$.}
It is a standard fact for Poisson binomial random variables
(i.e., sums of independent Bernoulli variables) that any median $m$ of $X$ satisfies $\lfloor \mu \rfloor \le m \le \lceil \mu \rceil$. In particular, $m\ge \lfloor \mu \rfloor$. Since $\mu\ge 1$ implies $\lfloor \mu \rfloor \ge \mu/2$, we obtain
\[
\Prob{X\ge \mu/2}\;\ge\;\Prob{X\ge m}\;\ge\;1/2.
\]
Combining the two cases, we conclude that
\[
\Prob{X\ge \mu/2}\ge 0.5\cdot\min\{1,\mu\}.
\]
\end{proof}

\end{document}